\documentclass[final,runningheads]{llncs}

\usepackage{ifthen}
\usepackage{latexsym}
\usepackage{orcidlink}
\usepackage[misc,geometry]{ifsym}
\newboolean{showappendix}

\setboolean{showappendix}{true}

\title{An $\Flsharp$ Based Algorithm for Active Learning of Minimal Separating Automata}
\titlerunning{Active Learning of Minimal Separating Automata}

\author{Jasper Laumen \and Leonne Snel \and Frits Vaandrager$^{\textrm{(\Letter) \orcidlink{0000-0003-3955-1910}}}$   }
\authorrunning{Jasper Laumen \and Leonne Snel \and Frits Vaandrager}
\institute{
	Institute for Computing and Information Sciences \\
	Radboud University, Nijmegen, The Netherlands\\ \email{\{Jasper.Laumen@ru.nl,Leonne.Snel@ru.nl,Frits.Vaandrager@ru.nl\}}} 

\usepackage{booktabs}
\usepackage{mathtools}
\usepackage[outline]{contour}
\contourlength{1.2pt}
\usepackage{tikz}
\usetikzlibrary{patterns, intersections, positioning, shapes.geometric, calc,
	automata, arrows, decorations.pathreplacing,decorations.pathmorphing}
\usetikzlibrary{cd}

\newcommand{\treeNodeLabel}[1]{\contour{white}{#1}}

\tikzset{
	initial text={},
	treenode/.style = {align=center, inner sep=0pt, text centered},
	basis/.style = {
		pattern=north east lines,
		pattern color=magenta!80!black!80!white,
	},
	frontier/.style={
		pattern=crosshatch dots,
		pattern color=yellow!80!black,
	},
	q0class/.style={
		pattern=vertical lines,
		pattern color=red!60!white,
	},
	q1class/.style={
		pattern=north west lines,
		pattern color=blue!60!white,
	},
	q2class/.style={
		pattern=grid,
		pattern color=green!50!black!60!white,
	},
	q3class/.style={
		pattern=horizontal lines,
		pattern color=yellow!50!black!60!white,
	},
	q4class/.style={
		pattern=crosshatch dots,
		pattern color=red!80!black,
	},
}

\usepackage{hyperref}
\usepackage[capitalise]{cleveref}
\usepackage{xspace}
\newcommand{\lsharp}{$L^\#$\xspace}
\newcommand{\lsharps}{$L^\#_\square$\xspace}
\newcommand{\Flsharp}{L^{\#}}
\newcommand{\lsharpb}{L^{\#}_{\Box}}
\newcommand{\lstar}{$L^*$\xspace}
\newcommand{\lstars}{$L^*_\square$\xspace}

\newcommand{\tree}{\mathcal{T}}

\newcommand{\darrow}{\mathord{\downarrow}}

\newcommand{\apart}{\mathbin{\#}}
\newcommand{\access}{\textsf{access}}

\newcommand{\Nat}{\mathbb{N}}
\newcommand{\statemerge}[3]{\mathsf{merge}(#1, #2, #3)}

\usepackage{algorithm}
\usepackage{algpseudocode}

\RequirePackage[rgb]{xcolor}
\definecolor{redimpact}{RGB/CMYK}{227,0,11/0,100,100,0}
\definecolor{poppy}{RGB/CMYK}{255,66,75/0,80,60,0}
\definecolor{ladybug}{RGB/CMYK}{190,49,26/0,92,90,27}
\definecolor{berry}{RGB/CMYK}{143,32,17/15,92,90,40}
\definecolor{maroon}{RGB/CMYK}{115,14,4/20,100,80,50}
\definecolor{mahogany}{RGB/CMYK}{74,0,4/45,100,75,70}

\RequirePackage{amsmath}
\RequirePackage{amssymb}
\newcommand{\set}[1]{\left\{#1\right\}}
\newcommand{\br}[1]{\left(#1\right)}
\newcommand{\abs}[1]{\left|#1\right|}
\newcommand{\A}{\mathcal{A}}
\newcommand{\B}{\mathcal{B}}
\newcommand{\C}{\mathcal{C}}
\newcommand{\Hyp}{\mathcal{H}}

\newcommand{\bigO}{\mathcal{O}}
\newcommand{\Obs}{\mathcal{T}}

\newcommand{\defined}{\ensuremath{\mathord{\downarrow}}}
\newcommand{\converges}{\ensuremath{\mathord{\downarrow}}}
\newcommand{\diverges}{\ensuremath{\mathord{\uparrow}}}
\newcommand{\partialto}{\ensuremath{\rightharpoonup}}

\begin{document}

\maketitle

\begin{abstract}
  A DFA separates two disjoint languages $L_1$ and $L_2$ if it accepts every word in $L_1$ and rejects every word in $L_2$. Algorithms for active learning of small separating DFAs have many applications, e.g.\  for learning network invariants, learning contextual assumptions in compositional verification, learning state machines from large amounts of log data, and learning bug pattern descriptions.  We propose a simple active learning algorithm, inspired by $L^{\#}$, that learns a minimal separating DFA for disjoint languages $L_1$ and $L_2$ if one exists. Experiments show that our algorithm significantly outperforms existing active learning algorithms on both randomly generated and industrial benchmarks. 
\end{abstract}

\section{Introduction}
\label{sec:introduction}
Let $L_1$ and $L_2$ be disjoint languages over some alphabet $\Sigma$.  We say that a deterministic finite automaton (DFA) $\A$ \emph{separates} $L_1$ and $L_2$ if it accepts every word in $L_1$ and rejects every word in $L_2$, that is, $L_1 \subseteq L(\A)$ and $L_2 \cap L(\A) = \emptyset$, with $L(\A)$ the language accepted by $\A$.
A minimal separating DFA is a separating DFA that has the least number
of states among all separating DFAs.
Algorithms for finding a minimal (or small) separating DFA have many applications, e.g.\ for computing network invariants \cite{GrinchteinLP06},  contextual assumptions in compositional verification \cite{ChenFCTW09}, minimal DFAs that are consistent with a set of positively and negatively classified samples \cite{Hig10}, and bug pattern descriptions \cite{YaacovWAH25}.
Finding a minimal separating DFA is nontrivial: for regular languages $L_1$ and $L_2$, the decision problem ``Does a separating DFA with $k$ states exist?'' is NP-hard \cite{Pfleeger73,nphard}. If languages $L_1$ and $L_2$ are context-free, then the decision problem whether there exists a separating DFA is already undecidable \cite{LeuckerN12}.

SAT and SMT solvers may be used to compute minimal separating DFAs if $L_1$ and $L_2$ are regular \cite{Neider12}.  
However, in applications the minimal DFAs for $L_1$ and $L_2$ are typically big, whereas a minimal separating DFA $\A$ is small.
Since solvers are unable to handle such large instances of an NP-hard problem, researchers have resorted to the use of active learning algorithms for the construction of a separating DFA. Different types of queries have been used, also depending on the characteristics of the applications.
A natural setting for learning separating DFAs is obtained by considering a slight variation of the Minimally Adequate Teacher (MAT) framework of \cite{Ang87} in which a \emph{learner} tries to identify a minimal separating DFA  by posing two types of queries to a \emph{teacher} \cite{LeuckerN12,lstars}:
\begin{itemize}
	\item 
	A \emph{membership query} $w \in \Sigma^*$.
	In response, the teacher answers ``$+$'' if $w \in L_1$, ``$-$'' if $w \in L_2$, and ``don't care'' (or ``don't now'', ``maybe'', etc), denoted by $\square$, in any other case.
	\item 
	A \emph{validity query}  for a hypothesis DFA $\Hyp$.
	In response, the teacher answers ``yes'' if $\Hyp$ separates $L_1$ and $L_2$, and otherwise returns a counterexample, that is, a sequence $w$ such that $w \in L_1 \setminus L(\Hyp)$ or $w \in L_2 \cap L(\Hyp)$.
\end{itemize}
A first active learning algorithm for learning a minimal separating DFA for regular languages $L_1$ and $L_2$ using membership and validity queries was proposed by Grinchtein \emph{et al.} \cite{GrinchteinLP06}.
Their algorithm, which is inspired by earlier work of Pena \& Oliveira \cite{PenaO99}, adapts Angluin's $L^*$ algorithm \cite{Ang87} by extending the observation table with don't care values ($\square$) besides accept ($+$) and reject values ($-$), and uses the algorithm of Biermann \& Feldman \cite{BiermannF72} to construct a DFA that is consistent with the observation table.
Chen \emph{et al.} \cite{ChenFCTW09} presented an $L^*$-based algorithm using 3DFAs that significantly outperformed the algorithm of Grinchtein \emph{et al.} \cite{GrinchteinLP06}.
Their algorithm also uses membership queries, but instead of validity queries it poses the (strictly more powerful) \emph{containment queries}:
\begin{itemize}
\item 
A \emph{containment query} for a hypothesis DFA $\Hyp$ has one of the following four types:
(i) $L_1 \subseteq L(\Hyp)$, (ii) $L(\Hyp) \subseteq L_1$, (iii) $\overline{L_2} \subseteq L(\Hyp)$, and (iv) $L(\Hyp) \subseteq \overline{L_2}$. The teacher returns ``yes'' if the containment holds, and provides a counterexample otherwise.
\end{itemize} 
Leucker \& Neider \cite{LeuckerN12} presented a survey of learning algorithms for separating DFAs, suggested some improvements of the algorithms of \cite{GrinchteinLP06,ChenFCTW09}, but did not explore any implementation or benchmarking.
Recently, however, Yaacov \emph{et al.} \cite{YaacovWAH25} implemented an improvement of the algorithm of \cite{ChenFCTW09} suggested by Leucker \& Neider \cite{LeuckerN12}, and used it for producing compact and informative bug descriptions for a collection of industrial benchmarks taken from \cite{rers}.
However, in order to make their approach sufficiently scalable, they used the state merging algorithm RPNI \cite{rpni} (rather than SAT/SMT solving as suggested in \cite{LeuckerN12}) to obtain small (rather than minimal) separating DFAs.
Moeller \emph{et al.} \cite{lstars} adapted Angluin's $L^*$ algorithm for learning minimal separating DFAs using membership and validity queries, and evaluated their \lstars algorithm on the benchmark collection of Oliveira \& Silva \cite{benchmarks} and some smaller benchmarks of Lee \emph{et al.} \cite{LeeSO16}.  Their implementation can only handle about half of the benchmarks of \cite{benchmarks}, due to scalability issues.
An interesting new type of queries was recently proposed by Walinga \emph{et al.} \cite{abs-2406-07208}. They consider the problem of learning a state machine from an amount of log data that is so big that conventional state merging algorithms cannot handle it. In order to solve this problem, they use database technology to efficiently query a large dataset stored on disk, and
pose \emph{database queries} of the following form:
\begin{itemize}
	\item 
	A \emph{database query} is a triple $(w, n, k)$ with $w \in \Sigma^*$ and $n, k \in \Nat$. The teacher returns at most $k$ (classified) words of length at most $n$ with $w$ as a prefix.
\end{itemize}
Note that database queries are incomparable to membership, validity and containment queries.
The DAALder algorithm of \cite{abs-2406-07208} uses database queries to grow  a prefix-tree acceptor (PTA) in a manner inspired by the \lsharp algorithm \cite{VaandragerGRW22}. The state merging tool FlexFringe \cite{VerwerH17} is then used to turn this PTA into a small state machine. DAALder is evaluated on benchmarks with artificial trace data.

The main result of this article is $\lsharpb$, a new and simple algorithm inspired by \lsharp \cite{VaandragerGRW22,lsharp}, for learning a minimal separating DFA 
using membership and validity queries.
Our algorithm is based on the following key ideas:
\begin{enumerate}
\item 
By storing query results in an observation tree (PTA) and using apartness of basis states as the main driver for membership queries, our algorithm has more options to make progress with learning than $L^*$-based algorithms, which are limited by the prefixes and suffices in an observation table; this helps when the teacher gives many ``don't care'' answers.
\item
We don't need the assumption of \lsharp that the basis is prefix closed.
\item
We don't need a worklist of candidate trees/tables as used in \lstars  \cite{lstars}.
\item 
Both the $\lsharpb$ algorithm and its correctness proof can be elegantly phrased in terms of morphisms between 3DFAs.
\item
By adding redundant clauses, which assert that states can only be merged when they are not apart, we significantly improve the SMT encoding of \cite{smt} of the problem whether a morphism exists from a PTA to a DFA with $n$ states.
\end{enumerate}
We also present two further optimisations of $\lsharpb$: a first one in which basis states are required to be pairwise incompatible rather than pairwise apart, and a second one in which we replace basis states by states that are closer to the root of the observation tree.

Experiments show that $\lsharpb$ outperforms 
\lstars  \cite{lstars} on the benchmarks of \cite{benchmarks}, and the algorithm of \cite{YaacovWAH25} on the benchmarks of \cite{rers}.
In particular, unlike \lstars, our $\lsharpb$ algorithm can solve almost all the benchmarks (within 200 seconds) of \cite{benchmarks}, and the running times of $\lsharpb$ for the RERS benchmarks \cite{rers} are up to two orders of magnitude faster than for the algorithm of \cite{YaacovWAH25}, even though our separating DFAs are minimal, unlike those  of \cite{YaacovWAH25}.
So our results really move the boundary of applications for which minimal separating DFAs can be computed.

The rest of this paper is organized as follows. After some preliminary definitions in \cref{sec:preliminaries}, we introduce the $\lsharpb$ algorithm in \cref{sec:algorithm}, and prove its correctness.
\cref{optimisations} presents the two optimisations of $\lsharpb$.
In \cref{evaluation} we evaluate the performance of $\lsharpb$ on two groups of benchmarks, and compare it with the performance of algorithms used in \cite{lstars,YaacovWAH25}.
We also perform an ablation-style study to evaluate the effect of the optimisations of \cref{optimisations} and the use of redundant clauses by the SMT solver.
Finally, in \cref{sec:discussion}, we discuss our results and present some directions for future research.
\ifshowappendix
Appendix~\ref{sec:incompatibility}
\else
The full version of this article \cite{LSV26report}
\fi
provides additional details on the first optimisation of \cref{optimisations}, and the connections between incompatibility and apartness.

\section{Preliminaries}
\label{sec:preliminaries}
This section first fixes notation for partial maps and sequences, and then introduces some basic automata theory concepts that we will need in this paper.

We write $f \colon X \partialto Y$ to denote that $f$ is a partial function from $X$
to $Y$ and write $f(x) \converges$ to mean that $f$ is defined on $x$, that is,
$\exists y \in Y \colon f(x)=y$, and conversely write $f(x)\diverges$ if $f$ is
undefined for $x$.
Function $f \colon X \partialto Y$ is \emph{total}, and we write $f \colon X \to Y$, if $f(x) \converges$, for all $x \in X$.
Often, we identify a partial function $f \colon X \partialto Y$ with the set $\{ (x,y) \in X \times Y \mid f(x)=y \}$. The composition of partial
maps $f\colon X\partialto Y$ and $g\colon Y\partialto Z$ is denoted by $g\circ
f\colon X\partialto Z$, and we have $(g\circ f)(x)\converges$ iff $f(x)\converges$
and $g(f(x)) \converges$.
We use the Kleene equality on partial functions, which states that on a given argument either both functions are undefined, or both are defined and their values on that argument are equal. 

Throughout this article, we fix a finite input alphabet $\Sigma$.
We use standard notations for sequences.  If $X$ is a set then $X^*$ denotes the set of finite \emph{sequences} (also called \emph{words}) over $X$. 
We write $\epsilon$ to denote the empty sequence, 
$x$ to denote the sequence consisting of a single element $x \in X$, and $\sigma \cdot \rho$ (or simply $\sigma  \rho$) to denote the concatenation of sequences $\sigma, \rho \in X^*$. 

Below we introduce \emph{3-valued nondeterministic finite automata (3NFA)}, a generalization of the 3DFAs of \cite{ChenFCTW09} to the nondeterministic setting.  Although this paper mostly considers deterministic automata, nondeterminism pops up in \Cref{optimisations} when we consider state merging.
A 3NFA is basically a finite automaton whose states can be either accepting, rejecting, or have an ``unknown'' status.

\begin{definition}[3NFA]
	\label{3NFA}
	A \emph{3-valued nondeterministic finite automaton (3NFA)} $\A$  is a tuple $(Q, q^0, F, \rightarrow)$, where
	\begin{itemize}
		\item
		$Q$ is a finite set of \emph{states},
		\item
		$q^0 \in Q$ is the \emph{initial} state,
		\item
		$F \colon Q \partialto \{+,-\}$, is the \emph{final state function}, and 
		\item
		$\mathord{\rightarrow} \subseteq Q \times \Sigma \times Q$ is the \emph{transition relation}.
	\end{itemize}
	A state $q$ is called \emph{accepting} if $F(q)=+$ and \emph{rejecting} if $F(q)= -$. 
	States $p$ and $q$ are called \emph{conflicting} if $F(p)\converges \wedge F(q)\converges \wedge F(p) \neq F(q)$, and \emph{nonconflicting} otherwise.
	We write $q \xrightarrow{a} q'$ if $(q, a, q') \in \mathord{\rightarrow}$.
	We say that $\A$ is \emph{deterministic} iff, for all $q, q', q'' \in Q$ and $a \in \Sigma$, $q \xrightarrow{a} q'$ and $q \xrightarrow{a} q''$ implies $q'= q''$. In this case, we refer to $\A$ as a \emph{3-valued deterministic finite automaton (3DFA)}.
	An \emph{NFA} (resp.\ \emph{DFA}) is a 3NFA (resp.\ 3DFA) for which function $F$ is total.
	We say that 3NFA $\A$ is \emph{complete} iff function $F$ is total and, for each $q \in Q$ and each $a \in \Sigma$, there is a $q'\in Q$ such that $q \xrightarrow{a} q'$.\footnote{Note that we do not require DFAs and NFAs to be complete.}
	We generalize the transition relation to words of arbitrary length, and define $\mathord{\Rightarrow} \subseteq Q \times \Sigma^* \times Q$ as the smallest relation such that, for all $q, q', q'' \in Q$, $w \in \Sigma^*$ and $a \in \Sigma$,
	\begin{enumerate}
		\item
		$q \xRightarrow{\epsilon} q$
		\item 
		if $q \xrightarrow{a} q'$ and $q' \xRightarrow{w} q''$ then $q \xRightarrow{a \; w} q''$
	\end{enumerate}
	The \emph{language} of $\A$, designated $L(\A)$, is the set
	$\{ w \in \Sigma^* \mid \exists q \in Q : q^0 \xRightarrow{w} q \}$.
	A language $L \subseteq \Sigma^*$ is \emph{regular} if there is a DFA $\A$ with $L = L(\A)$.
	Subscript $\A$ is used to disambiguate to which 3NFA we refer, e.g.\  $Q_{\A}$ and $\rightarrow_{\A}$.
\end{definition}


We consider maps between 3NFAs that preserve initial states, the final state function, and the transition function.

\begin{definition}[Morphism]
	\label{def functional simulation}
	For 3NFAs $\A$ and $\B$, a \emph{morphism} (a.k.a.\ a \emph{functional simulation}) from $\A$ to $\B$ is a function $f\colon Q_{\A} \to Q_{\B}$ such that, for all $q, q'\in Q_{\A}$ and $a \in \Sigma$,
	
	\begin{enumerate}
		\item $f(q^0_{\A}) = q^0_{\B}$, 
		\item $F_{\A}(q)\defined$ $\implies$ $F_{\A}(q) = F_{\B}(f(q))$, and
		\item $q \xrightarrow{a}_{\A} q'$ $\implies$ $f(q) \xrightarrow{a}_{\B} f(q')$.
	\end{enumerate}
	We write $\A \sqsubseteq_f \B$ if $f$ is a morphism from $\A$ to $\B$, and $\A \sqsubseteq \B$ if there exists a morphism from $\A$ to $\B$.
\end{definition}

The next lemma is a straightforward consequence of the definitions.

\begin{lemma}
	\label{preorder}
	$\sqsubseteq$ is a preorder on 3NFAs, i.e., a reflexive and transitive relation.
\end{lemma}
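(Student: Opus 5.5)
To establish that $\sqsubseteq$ is a preorder, I will exhibit explicit morphisms in each case and check the three clauses of \cref{def functional simulation} directly.

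\emph{Reflexivity.} For a 3NFA $\A$, I take the identity map $\mathrm{id}\colon Q_{\A} \to Q_{\A}$. It sends $q^0_{\A}$ to itself, so clause~1 holds. If $F_{\A}(q)\defined$ then $F_{\A}(q) = F_{\A}(\mathrm{id}(q))$, so clause~2 holds. Every transition $q \xrightarrow{a}_{\A} q'$ is literally $\mathrm{id}(q) \xrightarrow{a}_{\A} \mathrm{id}(q')$, so clause~3 holds. Hence $\A \sqsubseteq_{\mathrm{id}} \A$.

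\emph{Transitivity.} Suppose $\A \sqsubseteq_f \B$ and $\B \sqsubseteq_g \C$. I claim $g \circ f \colon Q_{\A} \to Q_{\C}$ is a morphism. Since $f$ and $g$ are total functions, so is their composite.
\begin{itemize}
\item Clause~1: $g(f(q^0_{\A})) = g(q^0_{\B}) = q^0_{\C}$.
\item Clause~3: a transition $q \xrightarrow{a}_{\A} q'$ gives $f(q) \xrightarrow{a}_{\B} f(q')$, and applying clause~3 for $g$ gives $g(f(q)) \xrightarrow{a}_{\C} g(f(q'))$.
\item Clause~2: if $F_{\A}(q)\defined$, then $F_{\A}(q) = F_{\B}(f(q))$. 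In particular $F_{\B}(f(q))$ is defined, so clause~2 for $g$ applies and yields $F_{\B}(f(q)) = F_{\C}(g(f(q)))$. Chaining the two equalities gives $F_{\A}(q) = F_{\C}(g(f(q)))$.
\end{itemize}

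The only point requiring any care is clause~2. Because $F$ is partial, the hypothesis of $g$'s clause~2 must be discharged before it can be used. This is done by the observation that $F_{\A}(q) = F_{\B}(f(q))$, with the left-hand side defined, forces the right-hand side to be defined. Everything else is a routine unfolding of the definitions.
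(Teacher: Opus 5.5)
Your proof is correct and is essentially the paper's argument. The paper only calls the lemma a straightforward consequence of the definitions, but it later cites ``the proof of'' this lemma to obtain the composite morphism $g \circ f$. That is exactly your identity-for-reflexivity and composition-for-transitivity argument, and your handling of definedness in clause~2 is the one point needing care and is done correctly.
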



A \emph{prefix-tree acceptor (PTA)} is a tree-shaped 3DFA. PTAs are commonly used as a datastructure to represent the results of membership queries.

\begin{definition}[PTA]
	A 3DFA $\Obs$ is a \emph{prefix tree acceptor (PTA)} iff for each $q \in Q_{\Obs}$ there is a unique sequence of inputs $w \in \Sigma^*$ with $q^0_{\Obs} \xRightarrow{w} q$.
	We write $\mathsf{access}(q)$ to denote this unique \emph{access sequence}.
	We say that $\Obs$ is a \emph{prefix tree acceptor} for a 3NFA $\A$ if $\Obs$ is a PTA and there exists a morphism $f\colon \Obs\to \A$.
\end{definition}

\begin{example}
	\cref{Fig:Obs}(bottom) shows a complete DFA $\A$ with 4 states that accepts words over alphabet $\{ a, b \}$ that contain both an even number of $a$'s and an even number of $b$'s.
	\cref{Fig:Obs}(top) gives an example of a PTA $\Obs$ for $\A$.
	Accepting states are indicated by a double circle and unknown states are indicated by a dashed circle.
	Morphism $f$ is indicated via coloring of the states. 
	\begin{figure}[h] 
		\begin{center}
			\begin{tikzpicture}[->,>=stealth',shorten >=1pt,auto,node distance=1.5cm,main node/.style={circle,draw,font=\sffamily\large\bfseries},
				]
				\def\yoffset{8mm}
				\node[initial,accepting,state,frontier] (0) {\treeNodeLabel{$t_0$}};
				\node[state, q1class] (1) [right of=0,yshift=\yoffset] {\treeNodeLabel{$t_1$}};
				\node[state, q2class] (2) [right of=0,yshift=-\yoffset] {\treeNodeLabel{$t_2$}};
				\node[state, basis] (3) [right of=2] {\treeNodeLabel{$t_3$}};
				\node[state,dashed, frontier] (5) [right of=2,yshift=2*\yoffset] {\treeNodeLabel{$t_5$}};
				\node[state, q1class] (4) [right of=3] {\treeNodeLabel{$t_4$}};
				\node[state,accepting, frontier] (6) [right of=4,yshift=2*\yoffset] {\treeNodeLabel{$t_6$}};
				\node[state, basis] (7) [right of=4] {\treeNodeLabel{$t_7$}};
				\node[state, dashed,q2class] (8) [right of=3,yshift=2*\yoffset] {\treeNodeLabel{$t_8$}};
				\node[state,q1class] (11) [right of=5, yshift=2*\yoffset] {\treeNodeLabel{$t_{11}$}};
				\node[state,q2class] (12) [right of=1, yshift=2*\yoffset] {\treeNodeLabel{$t_{12}$}};
				\node[state, accepting,frontier] (9) [right of=8, yshift=2*\yoffset] {\treeNodeLabel{$t_9$}};
				\node[state, q2class] (10) [right of=6, yshift=2*\yoffset] {\treeNodeLabel{$t_{10}$}};
				
				\node[initial,accepting,state,frontier] [below of=2, yshift=-2*\yoffset](q0) {\treeNodeLabel{$q_0$}};
				\node[state,q1class] (q1) [right of=q0,yshift=\yoffset] {\treeNodeLabel{$q_1$}};
				\node[state,q2class] (q2) [right of=q0,yshift=-\yoffset] {\treeNodeLabel{$q_2$}};
				\node[state,basis] (q3) [right of=q1,yshift=-\yoffset] {\treeNodeLabel{$q_3$}};
				
				\node[anchor=base] (f) at ($ (3.base) !.5! (q1.base) $) {\begin{tikzcd}
						{} \arrow{r}{f}
						&[8mm] {}
				\end{tikzcd}};
				
				\path[every node/.style={font=\sffamily\scriptsize}]
				(0) edge node[sloped,above] {$b$} (1)
				(0) edge node[sloped,below] {$a$} (2)
				(2) edge node[below] {$b$} (3)
				(2) edge node[sloped,above] {$a$} (5)
				(3) edge node [below] {$a$} (4)
				(3) edge node[sloped,above] {$b$} (8)
				(4) edge node[sloped,above] {$b$} (6)
				(4) edge node [below] {$a$} (7)
				(5) edge node[sloped,above] {$b$} (11)
				edge node [left] {$a$} (12)
				(8) edge node[sloped,above] {$a$} (9)
				(6) edge node[sloped,above] {$a$} (10)
				(q0) edge [bend left, above] node {$b$} (q1)
				(q0) edge [left, above] node {$a$} (q2)
				(q1) edge [bend left, above] node {$a$} (q3)
				(q1) edge [left, below] node {$b$} (q0)
				(q2) edge [bend left, below] node {$a$} (q0)
				(q2) edge [left, above] node {$b$} (q3)
				(q3) edge [left, below] node {$a$} (q1)
				(q3) edge [bend left, below] node {$b$} (q2);
			\end{tikzpicture}
			\caption{A prefix tree acceptor $\Obs$ (top) for a DFA $\A$ (bottom).}
			\label{Fig:Obs}
		\end{center}
	\end{figure}
\end{example}

Two states of a 3NFA are \emph{compatible} if there exists a morphism that maps both of them to the same state of some 3DFA.

\begin{definition}[Compatibility]
	Let $\A$ be a 3NFA $\A$. We call states $p$ and $q$ of $\A$ \emph{compatible} if $\A \sqsubseteq_f \B$, for some 3DFA $\B$ and morphism $f$ with $f(p)=f(q)$.
	If $p$ and $q$ are not compatible then we say they are \emph{incompatible}.
\end{definition}

Apartness \cite{troelstra_schwichtenberg_2000,GJapartness,VaandragerGRW22} is a constructive form of incompatibility where we actually have a \emph{witness} that proves the incompatibility.

\begin{definition}[Apartness]
	For a 3NFA $\A$, we say that states $p, q \in Q_{\A}$ are  \emph{apart}
	(written $p \apart q$) if there is some $w \in \Sigma^*$ and $p', q' \in Q_{\A}$ such that $p \xRightarrow{w} p'$, $q \xRightarrow{w} q'$, and $p'$ and $q'$ are conflicting.
	We call $w$ a \emph{witness} for $p \apart q$ (and a \emph{separating sequence}) and write $w \vdash p\apart q$.
\end{definition}

\begin{example}
	For the PTA of \cref{Fig:Obs}, we may derive a number of apartness pairs and corresponding witnesses.  
	Witness $\epsilon$ demonstrates that an accepting state is always apart from a rejecting state.  Thus, for instance:
	\[
	\epsilon \vdash t_0 \apart t_2 \qquad
	\epsilon \vdash t_0 \apart t_3 \qquad
	\epsilon \vdash t_0 \apart t_4 \qquad
	\epsilon \vdash t_2 \apart t_9 \qquad
	\]
	States can be apart, irrespective of whether they are unknown, accepting, or rejecting:
	\[
	a \vdash t_5 \apart t_8 \qquad
	a \vdash t_8 \apart t_6 \qquad
	b \vdash t_5 \apart t_4 \qquad
	b \vdash t_2 \apart t_4 \qquad
	\]
	Hence by ``back propagation'':
	\[
	a\;b \vdash t_2 \apart t_3 \qquad
	b\;a \vdash t_3 \apart t_4 \qquad
	\]
	The apartness pairs listed above show that $t_0$, $t_2$, $t_3$ and $t_4$ are pairwise apart.
\end{example}

The next lemma asserts that apartness implies incompatibility.
The converse implication does not hold \cite{RW23}; we will discuss an example in \cref{sec:promotion based on incompatibility}.

\begin{lemma}
	\label{la: apartness refinement}
	Let $\A$ be a 3NFA with states $p$ and $q$.
	Then $p \apart q$ implies that $p$ and $q$ are incompatible. 
\end{lemma}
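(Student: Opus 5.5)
We argue by contraposition: we assume $p$ and $q$ are compatible and show they cannot be apart. So suppose $\A \sqsubseteq_f \B$ for some 3DFA $\B$ and morphism $f$ with $f(p)=f(q)$, and suppose towards a contradiction that $w \vdash p \apart q$. That is, there are states $p', q'$ with $p \xRightarrow{w} p'$, $q \xRightarrow{w} q'$, and $p', q'$ conflicting.

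The first step is to lift condition 3 of the definition of morphism from single transitions to words. We prove by induction on the derivation of $\Rightarrow$ (equivalently, on the length of $w$) that $r \xRightarrow{w}_{\A} r'$ implies $f(r) \xRightarrow{w}_{\B} f(r')$. The base case $\epsilon$ is immediate. In the inductive case $r \xrightarrow{a} r'' \xRightarrow{w'} r'$, condition 3 gives $f(r) \xrightarrow{a}_{\B} f(r'')$, and the induction hypothesis gives $f(r'') \xRightarrow{w'}_{\B} f(r')$. Hence $f(r) \xRightarrow{aw'}_{\B} f(r')$.

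Applying this to both paths yields $f(p) \xRightarrow{w}_{\B} f(p')$ and $f(q) \xRightarrow{w}_{\B} f(q')$. Since $f(p)=f(q)$ and $\B$ is deterministic, a second easy induction on $w$ (determinism of $\rightarrow$ propagates to $\Rightarrow$) gives $f(p') = f(q')$.

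Now use that $p'$ and $q'$ are conflicting: $F_{\A}(p')\converges$, $F_{\A}(q')\converges$ and $F_{\A}(p') \neq F_{\A}(q')$. By condition 2 of the definition of morphism,
\[
F_{\B}(f(p')) = F_{\A}(p') \neq F_{\A}(q') = F_{\B}(f(q')),
\]
which contradicts $f(p')=f(q')$. Hence $p$ and $q$ are not apart, which proves the contrapositive.

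The only point that needs any care is the determinism step. $\A$ itself may be nondeterministic, so we may not argue within $\A$; the equality $f(p')=f(q')$ must be derived inside $\B$, which is exactly why the definition of compatibility requires $\B$ to be a 3DFA. Everything else is routine unfolding of the definitions.
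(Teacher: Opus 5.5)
Your proof is correct and is essentially the paper's argument. The paper inducts on $|w|$, taking incompatibility of the $a$-successors $p'', q''$ as the induction hypothesis and using, in the base case, the appendix lemma that a morphism only identifies nonconflicting states; you instead fix $f$ and push the whole witness path into $\B$ at once, using the same ingredients (preservation of transitions, determinism of $\B$ to carry $f(p)=f(q)$ along $w$, and condition 2 at the conflicting pair), so the only difference is where the induction is placed.
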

\begin{proof}
	Assume that $p \apart q$.
	Then there are $w \in \Sigma^*$ and $p', q' \in Q_{\A}$ such that $p \xRightarrow{w} p'$, $q \xRightarrow{w} q'$, and $p'$ and $q'$ are conflicting.
	We prove that $p$ and $q$ are incompatible by induction on the length of $w$:
	\begin{itemize}
		\item 
		If $w$ has length $0$ then $w = \epsilon$, $p = p'$, $q = q'$, and so $p$ and $q$ are conflicting.
		Suppose  $\A \sqsubseteq_f \B$, for some 3DFA $\B$ and morphism $f$. 
		Then \Cref{lemma identified by simulation implies mergeable} implies $f(p) \neq f(q)$.
		Hence $p$ and $q$ are incompatible.
		\item 
		Suppose $w = a \cdot w'$, for some $a \in \Sigma$ and $w'\in \Sigma^*$.
		Then there are states $p''$ and $q''$ such that
		$p \xrightarrow{a} p'' \xRightarrow{w'} p'$ and $q \xrightarrow{a} q'' \xRightarrow{w'} q'$. Note that $p'' \apart q''$.
		By the induction hypothesis, $p''$ and $q''$ are incompatible.
		Assume $p$ and $q$ are compatible.
		Then $\A \sqsubseteq_f \B$ and $f(p)=f(q)$, for some 3DFA $\B$ and morphism $f$.
		Since $f$ is a morphism, $f(p) \xrightarrow{a} f(p'')$ and $f(q) \xrightarrow{a} f(q'')$.
		Because $f(p)=f(q)$ and $\B$ is deterministic, we conclude  $f(p'') = f(q'')$.
		But, since $p''$ and $q''$ are incompatible, we also have $f(p'') \neq f(q'')$.
		Contradiction, and we conclude that $p$ and $q$ are incompatible.
	\end{itemize}
\end{proof}

\section{The \texorpdfstring{\lsharps}{L\#-box} Algorithm}
\label{sec:algorithm}
In this section we introduce $\lsharpb$, an algorithm that learns a minimal separating DFA for two disjoint languages $L_1$ and $L_2$ (if it exists) using membership and validity queries as defined in the introduction.  Our algorithm is inspired by $\Flsharp$ \cite{VaandragerGRW22,lsharp}, a recent algorithm for active learning of DFAs.
The main challenge compared to the traditional MAT framework of Angluin \cite{Ang87} for learning DFAs is that a teacher can respond with ``don't care'' to membership queries.  Existing active learning algorithms, $\Flsharp$ included, cannot handle this type of uncertainty. 

\subsection{Algorithm}
$\lsharpb$ maintains the following data structures:
\begin{enumerate}
\item 
A PTA $\Obs= (Q, q^0, F, \rightarrow)$, called the \emph{observation tree}, to store the outcomes of all membership queries: 
after a membership query for a word $w$, we add unique states to $\Obs$ to represent $w$ and all its prefixes (unless they are already present). 
Initial state $q^0$ represents the empty sequence $\epsilon$. 
Whenever state $q'$ represents sequence $w a$ and state $q$ represents $w$, then we add a transition $q \xrightarrow{a} q'$ to $\Obs$.
In this way each state $q$ represents its own access sequence.
If the response to a query $w$ is $+$ and $q$ represents $w$, then we set $F(q)=+$.
Similarly, if the response to a query $w$ is $-$ and $q$ represents $w$, then we set $F(q)=-$.
The final state function $F$ is undefined for all other states of $\Obs$,
that is, states that either represent queries for which a $\square$-response was received, or do not represent queries but rather proper prefixes of queries.
\Cref{Fig:Obs}(top) shows an example of the observation tree obtained after receiving $+$-responses for queries $\epsilon$, $abab$ and $abba$, and $-$-reponses for queries $a$, $b$, $ab$, $aaa$, $aab$, $aba$, $abaa$ and $ababa$.
Initially, in $\lsharpb$, the observation tree has just a single (initial) state $q^0$ with $F(q^0)$ undefined.
\item 
A set of states $Q_{\square}$ that correspond to queries for which a $\square$-response was received.
Once a state $q$ is in $Q_{\square}$, then the learner knows that a query  $\mathsf{access}(q)$ will provide no new information as the teacher does not care whether $w$ is accepted or not.
Initially, $Q_{\square} = \emptyset$. 
\item
A \emph{basis} consisting of a set $S$ of states of $\tree$ that are pairwise apart (i.e., form a clique).  Initially $S = \set{q^0}$. 
We maintain, for each nonbasis state $q \in Q_\tree \setminus S$, its \emph{candidate set}, that is, the set of basis states from which it is not apart:
\begin{equation*}
	C(q) = \set{p \in S \mid \lnot(q \apart p)}.
\end{equation*}
\item 
An integer $n$, which is the minimal number of states of our next hypothesis DFA. Initially $n = 1$, and we always have $\abs{S} \leq n$.
\end{enumerate} 

$\lsharpb$ starts with a membership query for the empty word, stores the outcome in the observation tree and $Q_{\square}$, and then repeatedly and nondeterministically applies the following four rules until a separating DFA has been found:

\paragraph{Promotion rule:}
Whenever we find a $q$ with with $C(q) = \emptyset$ (meaning that $q$ is apart from all basis states), we may add $q$ to basis $S$, and set $n$ to $\max(n, |S|)$.  

\paragraph{Extension rule:}
For a state $q \in S$ with access sequence $w = \access(q)$ and a word $v \in\Sigma^*$ up to length $n - \abs{S} + 1$, we may perform a membership query $wv$ (if we have not done so already).

\paragraph{Identification rule:}
For a state $q$ with $w = \access(q)$ that is a successor of a basis state, states $p, r \in C(q)$ and witness $v$ such that $v \vdash p \apart r$, we may perform a membership query $wv$ (if we have not done so already). If we receive a "$+$" or "$-$" answer, then $v$ also becomes a witness for $q \apart p$ or $q \apart r$. 

\paragraph{Validity rule:}
We may ask the SMT solver if there exists a complete DFA $\Hyp$ with at most $n$ states with a morphism $f : \Obs \to \Hyp$.
If the SMT solver answers ``no'' then we increment $n$.
Otherwise, if it answers ``yes'', we pose a validity query for the DFA $\Hyp$ provided by the solver.
If the teacher answers ``yes'' then the algorithm finishes and returns separating DFA $\Hyp$.
Otherwise, if the teacher provides a counterexample $w$, we pose membership queries for $w$ and all its prefixes (if we have not done so already).
This ensures that $\Hyp$ no longer will be proposed as an hypothesis by the solver.

\begin{example}
	\label{example run algorithm}
We illustrate the $\lsharpb$ algorithm on a simple example where $L_1$ is the language, accepted by the DFA of \Cref{Fig:Obs}(bottom), of words over $\set{a, b}$ with both an even number of $a$'s and an even number of $b$'s, and $L_2$ is the language consisting of all words over $\set{a, b}$ with odd length.
$\lsharpb$ initializes the observation tree with a single state $t_0$,
sets $Q_{\square}$ to $\emptyset$, $S$ to $\{ t_0 \}$ and $n$ to $1$. Via a membership query $\epsilon$, the algorithm learns that $t_0$ is accepting. 
Next the extension rule is applied twice and new states $t_1$ and $t_2$ with access sequences $a$ and $b$, respectively, are added to the tree.  Both states are rejecting.
Witness $\epsilon$ shows that states $t_0$ and $t_1$ are apart, and thus the promotion rule can be applied: state $t_1$ is added to the basis and $n$ is set to $2$.
The extension rule is applied twice to explore the new basis state $t_1$, and two new states $t_3$ and $t_4$ with access sequences $aa$ and $ab$, respectively, are added to the tree. 
Since $aa \in L_1$, state $t_3$ is accepting, and since $ab \not\in L_1$ and $ab \not\in L_2$, membership query $ab$ has a ``don't care'' answer, and state $t_4$ is added to $Q_{\square}$.
At this point the validity rule is applied. As we don't know where the outgoing transition of $t_1$ should go, there are two possible hypothesis models for $n=2$.  Suppose that the SMT solver returns the wrong one, $\Hyp_1$, shown at the top right of \Cref{Fig:ExampleLsharpSquare}.
The teacher will now return a counterexample, say $b b$. Via membership query $b b$\footnote{Actually, it is not necessary to query $b b$, since the learner may assume that the result is the opposite of the output of $b b$ in the hypothesis},
the learner finds out that the new state $t_5$ is accepting.
This time the SMT solver returns the correct hypothesis $\Hyp_2$, and a validity query confirms that this is a separating DFA.
Note that $\Hyp_2$ is smaller than the minimal DFA for $L_1$ of \Cref{Fig:Obs}(bottom).
The final PTA is shown at the left of \Cref{Fig:ExampleLsharpSquare}.
\begin{figure}[h] 
	\begin{center}
		\begin{tikzpicture}[->,>=stealth',shorten >=1pt,auto,node distance=1.5cm,main node/.style={circle,draw,font=\sffamily\large\bfseries},
			]
			\def\yoffset{8mm}
			\node[initial,accepting,state,basis] (0) {\treeNodeLabel{$t_0$}};
			\node[state,frontier] (2) [right of=0,yshift=\yoffset] {\treeNodeLabel{$t_2$}};
			\node[state, basis] (1) [right of=0,yshift=-\yoffset] {\treeNodeLabel{$t_1$}};
			\node[state,accepting,frontier] (3) [right of=1,yshift=-\yoffset] {\treeNodeLabel{$t_3$}};
			\node[state,dashed,frontier] (4) [right of=1,yshift=\yoffset] {\treeNodeLabel{$t_4$}};
			\node[state,accepting] (5) [right of=2,yshift=\yoffset] {\treeNodeLabel{$t_5$}};
			
			\node[initial,accepting,state,basis] [right of=5,xshift=1cm,basis](q0) {\treeNodeLabel{$q_0$}};
			\node[state,basis] (q1) [right of=q0] {\treeNodeLabel{$q_1$}};
			\node[initial,accepting,state,basis] (q2) [right of=3,xshift=1cm] {\treeNodeLabel{$q_0$}};
			\node[state,basis] (q3) [right of=q2] {\treeNodeLabel{$q_1$}};
			
			\path[every node/.style={font=\sffamily\scriptsize}]
			(0) edge node[sloped,above] {$b$} (2)
			(0) edge node[sloped,below] {$a$} (1)
			(1) edge node[below] {$a$} (3)
			(1) edge node[sloped,above] {$b$} (4)
			(2) edge node[sloped,above] {$b$} (5)
			
			(q0) edge [bend left, above] node {$a, b$} (q1)
			(q1) edge [bend left, below] node {$a$} (q0)
			(q1) edge [loop right] node {$b$} (q1)
			(q2) edge [bend left, above] node {$a, b$} (q3)
			(q3) edge [bend left, below] node {$a, b$} (q2);
		\end{tikzpicture}
		\caption{Final PTA (left), hypothesis $\Hyp_1$ (top right) and hypothesis $\Hyp_2$ (bottom right).}
		\label{Fig:ExampleLsharpSquare}
	\end{center}
\end{figure}
\end{example}

\paragraph{Remark.}
Unlike \lsharp, the $\lsharpb$ algorithm does not require that the basis is connected (prefixed closed).
This assumption is useful in \lsharp because construction of a hypothesis becomes trivial: just merge frontier states with nonconflicting basis states.  In \lsharp, without this assumption, some basis states may become unreachable in the hypothesis when we merge a preceding frontier state with another basis state. Thus \lsharp's hypothesis construction method no longer works when the basis is not prefix closed.  In $\lsharpb$ the SMT solver takes care of hypothesis construction and dropping prefix closure allows us to make progress with learning even when we fail to identify frontier states due to don't care answers.

\paragraph{Heuristics.}
$\lsharpb$ is a nondeterministic algorithm and its rules can be applied in an arbitrary order.
An implementation of $\lsharpb$, however, will typically use heuristics to select the next rule that will be applied, and this influences performance.
Thus, for instance, our implementation applies the four rules with the following priority order: promotion $>$ extension $>$ identification $>$ validity.
In particular, the validity rule (with its expensive call to an SMT solver) is used only when none of the other rules can be applied.
In applications where $n - |S|$ becomes big, the number of applications of the extension rule may grow exponentially, and a different heuristics may be required.
Our implementation just picks some arbitrary access sequence $w$ and suffix $v$ when it applies the extension rule.
Clever heuristics might help to select those sequences $w$ and $v$ for which the likelihood that membership query $w v$ produces an informative ($+$ or $-$) answer is maximal.
The main goal of processing a counterexample after a validity query $\Hyp$ is ensuring that $\Hyp$ is no longer a valid hypothesis. To this end, querying just $w$ would suffice, but we act under the assumption that prefixes of $w$ are more likely to not give a ``don't care'' answer.
Experimental evaluation shows that our approach works well for the benchmarks that we have tested. Clearly, the optimal learning strategy depends on the nature of the system under learning.

\subsection{SMT Encoding}
\label{smtencoding}
In order to decide whether there exists a complete DFA $\Hyp$ with at most $n$ states with a morphism $f : \Obs \to \Hyp$, we use the SMT solver Z3~\cite{z3}, which has support for the theory of uninterpreted functions.
We directly encode the definition of a morphism as given in \cref{def functional simulation}. This encoding is largely based on the encoding given in~\cite{smt}, but has been adapted to our setting and extended with redundant clauses. We introduce the following functions and encodings:
\begin{itemize}
        \item The states of $\Hyp$ are numbered $1$ to $n = \abs{Q_\Hyp}$.
        \item The states of $\tree$ are numbered $1$ to $m = \abs{Q_\tree}$; we use $I_q$ to denote the number associated with state $q \in Q_\tree$.
        \item We ensure that $I_q \leq \abs{S} (\leq n)$ for each $q \in S$
        \item The transition relation is encoded as a function $\delta_\Hyp : [1..n] \times \Sigma \to [1..n]$.
        \item The final state function is encoded by a function $F_\Hyp : [1..n] \to \{ 1, 0 \}$, with $1$ representing $+$, and $0$ representing $-$.
        \item The morphism is encoded by a function $f : [1..m] \to [1..n]$.
\end{itemize}

We introduce the following constraints: \begin{gather}
    \bigwedge_{q \in Q_\tree}\bigwedge_{\sigma \in \Sigma}\delta_\tree(q,\sigma)\darrow \implies f(I_{\delta_\tree(q,\sigma)}) = \delta_\Hyp(f(I_{q}), \sigma) \\
    \bigwedge_{q \in Q_\tree} F_\tree (q)\darrow \implies F_\tree(q) = F_\Hyp(f(I_{q})) \\
    \bigwedge_{q \in S}f(I_q) = I_q \\
    \bigwedge_{q \in Q_\tree \setminus S} \br{\br{\bigvee_{\abs{S} < k \leq n} f(I_q) = k} \lor \br{\bigvee_{p \in C(q)} f(I_{q}) = I_{p}}}
\end{gather}


Constraints (1) and (2) encode the definition of a morphism. Constraints (3) and (4) are redundant, but significantly speed up the solving time by specifying that two states that are apart in $\tree$ can never be mapped to the same state in $\Hyp$. Notice that the premises of the conditions in constraints (1) and (2) are known before calling the solver. Thus, to further speed up the solving time, we can simply include the conclusion itself as a constraint, only if the premise holds.

When the solver returns \texttt{SAT}, it is easy to extract a hypothesis from $\delta_\Hyp$ and $F_\Hyp$. In the case that the solver returns \texttt{UNSAT}, we know that no hypothesis exists.

\begin{proposition}
\label{smtcorrect}
    When an SMT solver with constraints (1) to (4) returns \texttt{SAT}, then there exists a complete DFA $\Hyp$ with at most $n$ states with a morphism $f : \Obs \to \Hyp$. Likewise, when the solver returns \texttt{UNSAT}, then there does not exist such a DFA.
\end{proposition}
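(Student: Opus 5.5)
The proof splits into two directions. The SAT direction is soundness of the encoding. The UNSAT direction is completeness: whenever a suitable DFA exists, some model of constraints (1)--(4) also exists. The main work is showing that the redundant constraints (3) and (4) exclude no genuine solution up to renaming of states.

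\textbf{SAT direction.} Given a model, I define $\Hyp$ with state set $[1..n]$, initial state $f(I_{q^0})$, transitions $i \xrightarrow{\sigma} \delta_\Hyp(i,\sigma)$, and $F_\Hyp(i)=+$ iff the model gives $F_\Hyp(i)=1$. Since $\delta_\Hyp$ and $F_\Hyp$ are total functions, $\Hyp$ is a complete DFA with at most $n$ states. The map $q \mapsto f(I_q)$ is a morphism by \cref{def functional simulation}. It preserves the initial state by the choice of initial state, preserves $F$ by constraint (2), and preserves transitions by constraint (1). Constraints (3) and (4) are not needed here.

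\textbf{UNSAT direction.} I argue by contraposition. Suppose a complete DFA $\Hyp'$ with at most $n$ states admits a morphism $g\colon\Obs\to\Hyp'$. I first pad $\Hyp'$ to exactly $n$ states if needed, for instance by adding unreachable copies of some state; this keeps it complete and keeps $g$ a morphism.

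The basis states must be mapped injectively. Elements of $S$ are pairwise apart, so by \cref{la: apartness refinement} they are pairwise incompatible. Since $g$ maps into a 3DFA, this forces $g(p)\neq g(p')$ for distinct $p,p'\in S$.

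Because the basis is mapped injectively, I can choose a bijection $\pi\colon Q_{\Hyp'}\to[1..n]$ with $\pi(g(p)) = I_p$ for every $p \in S$. This uses that the numbers $I_p$ for $p\in S$ are distinct and lie in $[1..\abs{S}]$, as guaranteed by the encoding. I transport $\Hyp'$ along $\pi$ to obtain $\delta_\Hyp$ and $F_\Hyp$, and set $f(I_q)=\pi(g(q))$.

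I then check the four constraints.
\begin{itemize}
\item Constraints (1) and (2) hold because $g$ is a morphism and $\pi$ is an isomorphism of automata.
\item Constraint (3) holds by the choice of $\pi$.
\item For constraint (4), take a nonbasis $q$. If $f(I_q) > \abs{S}$, the first disjunct holds. Otherwise $f(I_q) = I_p = f(I_p)$ for some $p\in S$, so $g(q)=g(p)$. By \cref{la: apartness refinement}, $q$ cannot be apart from $p$, so $p\in C(q)$ and the second disjunct holds.
\end{itemize}
This gives a model, contradicting UNSAT.

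I expect the main obstacle to be the renaming step. It needs the basis to be mapped injectively, which is exactly where \cref{la: apartness refinement} is used. It also needs the padding to $n$ states, so that every value in $[1..n]$ denotes a state of $\Hyp$ and the bijection $\pi$ exists.
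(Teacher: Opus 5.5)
Your proof is correct and follows the justification the paper itself gives. The paper has no formal proof, only the remark that constraints (1)--(2) encode \cref{def functional simulation} and that (3)--(4) are redundant because apart states cannot be mapped to the same state of a DFA (\cref{la: apartness refinement}); your argument makes this rigorous, in particular by showing that (3) is redundant only up to a renaming $\pi$ of the hypothesis states, after padding to exactly $n$ states, a point the paper leaves implicit.
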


\subsection{Correctness and Query Complexity of \texorpdfstring{\lsharps}{L\#-box}}
In this subsection, we will prove that the \lsharps algorithm is correct in the sense that it always terminates, and when it terminates, it returns a correct and minimal DFA. We will start by proving the minimality of the returned DFA. To this end, we will show that $n$ is only incremented when there is no complete DFA with $n$ states that the teacher would accept.

\begin{lemma}
\label{extensionlemma}
    Consider a learner that uses $\lsharpb$ to interact with a teacher that answers membership and validity queries for disjoint languages $L_1$ and $L_2$.
    Suppose that, at some point, the learner has constructed PTA $\Obs$, and suppose further that $\B$ is a complete separating DFA for $L_1$ and $L_2$.
    Then $\Obs \sqsubseteq \B$.
\end{lemma}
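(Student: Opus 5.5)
The morphism will be the obvious one: send each tree state to the state of $\B$ reached by its access sequence. Concretely, since $\B$ is a complete DFA, every word $w \in \Sigma^*$ has a unique run from $q^0_{\B}$, and we write $\delta_{\B}(w)$ for the state $r$ with $q^0_{\B} \xRightarrow{w} r$. Existence of $r$ follows from completeness, by induction on $|w|$, and uniqueness follows from determinism. For $q \in Q_{\Obs}$ we then set $f(q) = \delta_{\B}(\access(q))$, which is well defined because $\Obs$ is a PTA, so each state has a unique access sequence. This $f$ is total on $Q_{\Obs}$.

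We then check the three clauses of \cref{def functional simulation}.

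\emph{Initial state.} $\access(q^0_{\Obs}) = \epsilon$, so $f(q^0_{\Obs}) = q^0_{\B}$.

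\emph{Transitions.} If $q \xrightarrow{a}_{\Obs} q'$, then $\access(q') = \access(q)\,a$. This holds because $q^0_{\Obs} \xRightarrow{\access(q)\,a} q'$ and access sequences are unique. The run of $\B$ on $\access(q)\,a$ passes through $\delta_{\B}(\access(q))$ and then takes an $a$-step, so $f(q) \xrightarrow{a}_{\B} f(q')$ by determinism.

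\emph{Final states.} This is the only place where the learning process matters. We need an invariant of $\lsharpb$: whenever $F_{\Obs}(q)\converges$, the teacher answered membership query $\access(q)$ with $F_{\Obs}(q)$. If $F_{\Obs}(q) = +$, then $\access(q) \in L_1$. If $F_{\Obs}(q) = -$, then $\access(q) \in L_2$. The invariant holds by inspection of the algorithm. The tree is only modified by membership queries, in the extension, identification and validity rules. $F$ is set only for a state representing a query that received a $+$ or $-$ answer, and it is left undefined for $\square$-answers and for mere prefixes. If the optimisation of the footnote in the example is used (inferring the answer for a counterexample), it still yields the true classification, since a counterexample $w$ satisfies $w \in L_1\setminus L(\Hyp)$ or $w \in L_2\cap L(\Hyp)$.

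Given the invariant, the final-state clause follows because $\B$ separates $L_1$ and $L_2$. If $\access(q) \in L_1$, then $\access(q) \in L(\B)$, so $F_{\B}(f(q)) = +$. If $\access(q) \in L_2$, then $\access(q)\notin L(\B)$; since $F_{\B}$ is total, $F_{\B}(f(q)) = -$. Here I read the language definition of a DFA as requiring the reached state to be accepting.

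The main obstacle is stating this invariant cleanly. It is a statement about the execution history rather than about $\Obs$ alone, so it has to be checked rule by rule, including the counterexample processing. Everything else is routine.
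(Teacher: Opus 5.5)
Your proof is correct and follows the paper's argument. The paper also uses completeness of $\B$ to get the unique map along access sequences, which settles the initial-state and transition clauses. It then handles the final-state clause with the same case split: a $+$ answer gives $\access(q) \in L_1 \subseteq L(\B)$, and a $-$ answer gives $\access(q) \in L_2$, which is disjoint from $L(\B)$. Your explicit invariant on how $F_{\Obs}$ gets defined, your check of the footnote shortcut for counterexamples, and your reading of $L(\cdot)$ as requiring an accepting final state all make precise what the paper leaves informal.
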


\begin{proof}
    Since $\B$ is a complete DFA, we may easily construct a unique mapping $f$ from $Q_\Obs$ to $Q_\B$ that satisfies the first and third condition of a morphism:
    $f(q^0_{\Obs}) = q^0_{\B}$ and
    $q \xrightarrow{a}_{\Obs} q'$ $\implies$ $f(q) \xrightarrow{a}_{\B} f(q')$.
    It remains to show that $f$ satisfies the second condition of a morphism.
    Suppose that, for some state $q \in Q_{\Obs}$, $F_{\Obs}(q)\defined$.
    Let $w = \access(q)$.  Then the learner has performed a membership query to which the teacher has replied, leading to value $F_{\Obs}(q)$ being defined.  There are two cases:
    (1) The teacher replied $+$.  This means that $w \in L_1$.
    Since $\B$ is a separating DFA for $L_1$ and $L_2$, $L_1 \subseteq L(\B)$.
    Hence $w \in L(\B)$.  This implies $F_{\Obs}(q) = F_{\B}(f(q)) = 1$, as required.
    (2) The teacher replied $-$. This means that $w \in L_2$.
    Since $\B$ is a separating DFA for $L_1$ and $L_2$, $L_2 \cap L(\B) = \emptyset$.
    Hence $w \not\in L(\B)$.  This implies $F_{\Obs}(q) = F_{\B}(f(q)) = 0$, as required.
\end{proof}

\begin{lemma}
\label{notsmallersmt}
    When the SMT solver returns \texttt{UNSAT}, then there is no complete DFA $\Hyp$ with $n$ states that would be accepted by the teacher.
\end{lemma}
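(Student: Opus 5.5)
We argue by contraposition. Suppose the teacher would accept some complete DFA $\B$ with $n$ states, i.e.\ $\B$ separates $L_1$ and $L_2$. We show that constraints (1)--(4) are then satisfiable, so the solver cannot return \texttt{UNSAT}. By \Cref{extensionlemma}, $\Obs \sqsubseteq_g \B$ for a morphism $g$, namely the unique map obtained by running access sequences in $\B$. Number the states of $\B$ and interpret $\delta_\Hyp$, $F_\Hyp$ and $f$ through $\B$ and $g$. Constraints (1) and (2) then hold, since they literally encode that $g$ is a morphism. By \Cref{smtcorrect}, one could also phrase the argument as: the existence of such a DFA and morphism yields a model. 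But the redundant constraints (3) and (4) require a specific numbering of $\Hyp$'s states, so we must build that numbering ourselves.

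The key step is choosing a numbering of $Q_\B$ that satisfies (3) and (4). The basis states in $S$ are pairwise apart, so by \Cref{la: apartness refinement} they are pairwise incompatible. Since $g$ is a morphism into the 3DFA $\B$, this means $g(p) \neq g(r)$ for distinct $p, r \in S$. Hence $g$ is injective on $S$, and $|S| \le n$. We number the states of $\B$ so that each $q \in S$ receives number $I_q$. This is possible because the $I_q$ for $q\in S$ can be taken as $1,\dots,|S|$, the third bullet of the encoding forcing $I_q \le |S|$. The remaining $n-|S|$ states of $\B$ get the numbers $|S|+1,\dots,n$. This permutation of state names preserves (1) and (2) and makes (3) hold.

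For (4), take a nonbasis state $q$. If $g(q)$ is not the image of any basis state, its number lies in $(|S|, n]$ and the first disjunct holds. Otherwise $g(q) = g(p)$ for some $p \in S$. Then $p$ and $q$ are compatible, witnessed by $g$, so by \Cref{la: apartness refinement} they are not apart, i.e.\ $p \in C(q)$. Thus $f(I_q) = I_p$ and the second disjunct holds. Every constraint is therefore satisfied, contradicting \texttt{UNSAT}.

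The main obstacle is the bookkeeping around the renumbering: showing it is consistent with the fixed indices $I_q$ and that injectivity of $g$ on $S$ comes from apartness via \Cref{la: apartness refinement}. The rest is a direct unfolding of definitions. One caveat: if the statement is read as ``at most $n$ states'', the same argument works, after padding $\B$ with unreachable states if necessary.
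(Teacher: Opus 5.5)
Your proof is correct. It has the same skeleton as the paper's (contraposition, with \cref{extensionlemma} supplying a morphism $g$ from $\Obs$ into any complete separating DFA), but it takes a different route from there to satisfiability of the constraints.

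The paper simply cites the second half of \cref{smtcorrect} as a black box, so its proof is two lines. You instead build the satisfying assignment by hand, which amounts to proving that half of \cref{smtcorrect} yourself. Your argument never uses separation beyond obtaining $g$, so it works for any complete DFA with at most $n$ states that admits a morphism from $\Obs$.

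What your route buys is an explicit justification of the one nontrivial point: the redundant clauses (3) and (4) do not exclude any genuine hypothesis. You get this by using \cref{la: apartness refinement} twice:
\begin{itemize}
\item pairwise apartness of $S$ makes $g$ injective on $S$, so the renumbering that satisfies (3) exists;
\item if $g(q)=g(p)$ for some $p\in S$, then $p$ and $q$ are compatible, hence not apart, so $p\in C(q)$ and (4) holds.
\end{itemize}
Since the paper states \cref{smtcorrect} without proof, your version is more self-contained.

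You should tone down the remark that you ``must'' build the numbering yourself. If \cref{smtcorrect} is taken as given, citing it is enough. Your padding remark for the ``at most $n$'' reading is correct, and it matches how the lemma is used in \cref{correctn}.
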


\begin{proof}
    The proof follows from the second part of \cref{smtcorrect} and \cref{extensionlemma}.
\end{proof}

\begin{lemma}
\label{notsmallers}
    There does not exist a complete DFA $\Hyp$ with fewer than $\abs{S}$ states such that $\tree$ is a PTA for $\Hyp$.
\end{lemma}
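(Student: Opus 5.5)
The plan is to argue by contradiction, via a pigeonhole argument on the basis. Suppose $\Hyp$ is a complete DFA with fewer than $\abs{S}$ states and that $\tree$ is a PTA for $\Hyp$. By definition of a PTA for a 3NFA, there is then a morphism $f\colon \Obs \to \Hyp$, so in particular $\Obs \sqsubseteq_f \Hyp$.

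The first step is to record the invariant that the basis $S$ consists of pairwise apart states of $\tree$. This holds initially, since $S=\set{q^0}$. It is preserved by the promotion rule, which only adds a state $q$ with $C(q)=\emptyset$, i.e.\ a state apart from every current basis state. It is also preserved by the growth of $\tree$ under the other rules: these only add new states and define $F$ on previously undefined states, so any witness $w \vdash p \apart r$ remains a witness.

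Next, since $f$ maps the $\abs{S}$ basis states into $Q_\Hyp$ and $\abs{Q_\Hyp} < \abs{S}$, the pigeonhole principle gives two distinct states $p, r \in S$ with $f(p) = f(r)$. Because $\Hyp$ is a DFA, and in particular a 3DFA, this means $p$ and $r$ are compatible by definition.

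Finally, $p \apart r$ holds by the invariant, so \cref{la: apartness refinement} says $p$ and $r$ are incompatible, which is a contradiction. The only point needing care is the invariant that apartness is monotone as $\tree$ grows. This should be a short observation, because $\tree$ only ever gains states, transitions, and definitions of $F$ (previously defined values of $F$ never change), so I expect no real obstacle.
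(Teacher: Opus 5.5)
Your proposal is correct and follows the same route as the paper. Basis states are pairwise apart, so by \cref{la: apartness refinement} any morphism $f\colon \tree \to \Hyp$ must be injective on $S$, which forces $\Hyp$ to have at least $\abs{S}$ states. Your explicit check that pairwise apartness of $S$ survives the growth of $\tree$ (witnesses persist because paths and defined values of $F$ are never removed) is a detail the paper simply takes for granted.
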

\begin{proof}
    We know that all states in $S$ are pairwise apart. Thus, for any morphism $f$ from $\tree$ to $\Hyp$, we know from \cref{la: apartness refinement} that every state in $S$ must be mapped to a different state in $\Hyp$. Thus, $\Hyp$ needs at least $\abs{S}$ states.
\end{proof}

\begin{proposition}
    \label{correctn}
    When the \lsharps algorithm returns a DFA $\Hyp$, then $\Hyp$ is a minimal separating DFA for $L_1$ and $L_2$. In other words, there does not exist a (complete) DFA $\Hyp'$ with fewer states than $\Hyp$ that would be accepted by the teacher.
\end{proposition}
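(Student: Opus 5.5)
The proof splits into two parts: $\Hyp$ is separating, and no smaller complete separating DFA exists. The first part is immediate, because the algorithm only returns $\Hyp$ when the teacher answers ``yes'' to the validity query for $\Hyp$. For the second part, let $n_0$ be the value of $n$ when the final validity rule fires. By \cref{smtcorrect}, $\Hyp$ has at most $n_0$ states. It therefore suffices to prove the invariant below and apply it at termination.

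\textbf{Invariant:} at every point of a run, there is no complete separating DFA with fewer than $n$ states.

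Given the invariant at termination, $\Hyp$ has at most $n_0$ states while every complete separating DFA has at least $n_0$ states. Hence $\Hyp$ has exactly $n_0$ states and is minimal. (If $\Hyp$ has fewer than $n_0$ states it would itself contradict the invariant, so this case cannot occur.)

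The invariant is proved by induction on the sequence of rule applications, looking only at the points where $n$ changes.
\begin{itemize}
\item \emph{Initially} $n=1$, and the claim is vacuous, since any complete DFA has at least one state (the initial state).
\item \emph{Extension and identification rules.} These do not change $n$, so the invariant is preserved.
\item \emph{Promotion rule.} It sets $n$ to $\max(n,|S|)$, where $S$ is the new basis. Suppose $\B$ is a complete separating DFA with fewer than $|S|$ states. By \cref{extensionlemma}, $\Obs \sqsubseteq \B$, so $\Obs$ is a PTA for $\B$, and this contradicts \cref{notsmallers}. Together with the old invariant for the previous $n$, this gives the invariant for the new value.
\item \emph{Validity rule incrementing $n$ to $n+1$.} This happens only when the solver answers \texttt{UNSAT}. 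The induction hypothesis already excludes separating complete DFAs with fewer than $n$ states, so the only remaining case is exactly $n$ states, and that is excluded by \cref{notsmallersmt}.
\end{itemize}

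The main subtlety is a mismatch in how \cref{notsmallersmt} is phrased. The solver checks for DFAs with \emph{at most} $n$ states, while the lemma speaks of DFAs with $n$ states. This is harmless here because the smaller cases are handled by the induction hypothesis. Alternatively, one can note directly that a separating DFA with at most $n$ states yields a morphism from $\Obs$ by \cref{extensionlemma}, which would make the solver answer \texttt{SAT} by \cref{smtcorrect}.

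A second point is that the statement quantifies over complete DFAs, whereas the paper does not require DFAs to be complete. The statement itself says ``(complete)'', so I restrict to complete DFAs. If needed, one can add the remark that a non-complete separating DFA can be completed by adding a sink state. This changes the count by one, so I would keep to the complete reading as the statement does.
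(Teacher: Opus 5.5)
Your proof is correct and follows the same route as the paper, which likewise argues that every increase of $n$ is justified: by \cref{notsmallersmt} for the validity rule, and by \cref{notsmallers} (implicitly together with \cref{extensionlemma}) for promotion, so an accepted hypothesis with at most $n$ states must be minimal. You make this explicit as an invariant proved by induction over rule applications, which handles the update $n := \max(n,|S|)$ and the ``at most $n$'' versus ``$n$ states'' phrasing of \cref{notsmallersmt} more carefully than the paper's two-line argument.
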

\begin{proof}
    From \cref{notsmallersmt} and \cref{notsmallers}, we know that whenever we increment $n$, then there does not exist a valid hypothesis with $n$ or fewer states. Thus, when the teacher accepts our hypothesis with $n$ states, then it would not accept any hypothesis with less than $n$ states.
\end{proof}

Next, we will show that \lsharps always terminates if $L_1$ is regular. We use the following bound on the number of regular languages that are accepted by a DFA with up to $N$ states.

\begin{lemma}[Ishigami \& Tani \cite{IshigamiT97}]
	\label{number of DFAs}
	Let $k$ be the number of input symbols in alphabet $\Sigma$, $N$ a natural number, and
	\begin{eqnarray*}
		\mathit{DFA}(N) & = & \{ L(\A) \mid \A \mbox{ is a DFA with at most } N \mbox{ states over alphabet } \Sigma \}.
	\end{eqnarray*}
	Then $| \mathit{DFA}(N) | \leq 2^{(k-1+o(1)) N \log N}$.
\end{lemma}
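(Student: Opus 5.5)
Although the paper cites this bound from Ishigami \& Tani \cite{IshigamiT97}, I would reprove the upper bound by a direct counting argument. It counts labelled automata and then divides out the relabelings that cannot change the language.

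\emph{Reduction to complete minimal DFAs.} The DFAs here need not be complete, and $F$ is total. Any DFA with at most $N$ states can be completed by adding one rejecting sink state, which leaves the language unchanged. So every $L \in \mathit{DFA}(N)$ is accepted by a complete DFA with at most $N+1$ states. Consequently every such $L$ has a unique (up to isomorphism) minimal complete DFA with exactly $M$ states for some $M \le N+1$. It therefore suffices to bound, for each $M \le N+1$, the number $\ell(M)$ of languages whose minimal complete DFA has exactly $M$ states, and then sum.

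\emph{Counting with relabelings.} Fix the state set $[1..M]$ with initial state $1$. A complete DFA on this set is determined by a transition function $[1..M]\times\Sigma\to[1..M]$ and a final-state function $[1..M]\to\{+,-\}$. Hence there are at most $M^{kM}2^M$ of them. Now take a language $L$ whose minimal complete DFA $\A_L$ has $M$ states. I claim that $\A_L$ yields $(M-1)!$ \emph{distinct} labelled DFAs on $[1..M]$ with initial state $1$, one for each bijection fixing the initial state. Two such labellings can coincide only if they differ by a nontrivial automorphism of $\A_L$. A minimal DFA has no nontrivial automorphism: every state is reachable, and an automorphism $g$ maps the state reached by $w$ to the state reached by $w$, so $g$ is the identity. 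Distinct languages also give disjoint families of labelled DFAs. Therefore
\[
\ell(M)\,(M-1)! \le M^{kM}2^M .
\]

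\emph{Asymptotics.} By Stirling, $\log (M-1)! = M\log M - O(M)$. This gives
\[
\log \ell(M) \le kM\log M - M\log M + O(M) = (k-1+o(1))M\log M .
\]
Summing over $M \le N+1$ adds only a factor $N+1$, that is, $O(\log N)$ in the exponent. Replacing $N+1$ by $N$ changes $(N+1)\log(N+1)$ by $O(\log N)$. Both are absorbed in the $o(1)N\log N$ term, which yields $|\mathit{DFA}(N)|\le 2^{(k-1+o(1))N\log N}$.

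The main obstacle is the step that improves the exponent from $k$ to $k-1$. That improvement requires the injectivity argument: labellings of a minimal DFA are pairwise distinct, because minimal DFAs are rigid. It also requires checking that the passage to complete automata costs only one extra state. The rest is routine Stirling estimation.
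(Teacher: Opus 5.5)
The paper gives no proof of this lemma. It is imported from Ishigami \& Tani, whose paper determines the VC-dimension of the class of DFAs with $N$ states over $k$ letters to be $(k-1+o(1))N\log N$; the cardinality bound corresponds to the upper-bound side of that result.

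Your argument is a correct, self-contained derivation of exactly the inequality the paper uses. Its ingredients are completion by a rejecting sink, uniqueness of the minimal complete DFA, rigidity of accessible DFAs, and Stirling. Rigidity is what makes each language contribute $(M-1)!$ distinct labelled automata on $[1..M]$ with initial state $1$, with different languages contributing disjoint families.

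Compared with the bare citation, your route has three advantages:
\begin{itemize}
\item it is elementary;
\item it yields the explicit non-asymptotic bound $|\mathit{DFA}(N)|\le\sum_{M\le N+1}M^{kM}2^M/(M-1)!$;
\item it checks that the bound survives the paper's convention that DFAs need not be complete. The extra sink state costs only $O(\log N)$ in the exponent, a point the citation leaves implicit.
\end{itemize}
What the citation buys, besides brevity, is tightness: the matching lower bound on the VC-dimension gives $|\mathit{DFA}(N)|\ge 2^{(k-1-o(1))N\log N}$, although the termination theorem does not need this.

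One point of rigour in your summation step. Do not sum bounds of the form $(k-1+o(1))M\log M$, because that $o(1)$ depends on $M$. Use instead the explicit exponent $(k-1)M\log M+cM$ with a uniform constant $c$; for instance $c=\log_2 e+2$ works for all $M\ge 1$. This exponent is increasing in $M$ for $k\ge 1$, so each of the $N+1$ terms is dominated by the one for $M=N+1$.
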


\begin{theorem}
	If $L_1$ is regular then the \lsharps algorithm terminates.
	The total number of membership queries is in $\bigO(kN 2^{k N \log N} )$, and the total number of validity queries is at most $2^{(k-1+o(1)) N \log N}$.
\end{theorem}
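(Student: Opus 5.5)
We take $N$ to be the number of states of the minimal complete DFA for $L_1$. That DFA separates $L_1$ and $L_2$, because $L_2 \cap L_1 = \emptyset$. Hence some complete separating DFA with $N$ states exists.

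\textbf{Bounding $n$.} First we show $n \le N$ throughout the run. By \cref{extensionlemma} this separating DFA satisfies $\Obs \sqsubseteq \B$ at every stage. So by \cref{la: apartness refinement} the pairwise apart basis has $|S| \le N$, as in the proof of \cref{notsmallers}. The validity rule increments $n$ only on \texttt{UNSAT}. By \cref{notsmallersmt}, \texttt{UNSAT} means no complete separating DFA with $n$ states exists, so $n < N$ before each increment. Promotion sets $n$ to at most $|S| \le N$. Together these give $n \le N$ invariantly.

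\textbf{Termination.} Each rule either ends the run or strictly increases a bounded potential.
\begin{itemize}
\item Promotion increases $|S| \le N$.
\item Extension and identification each add a new query. Their words have the form $\access(q)\,v$ with $q \in S$ or $q$ a successor of a basis state, and $v$ of length at most $n-|S|+1$ or a witness. For a fixed basis history there are finitely many such words.
\item A validity query with answer ``no'' eliminates the current hypothesis $\Hyp$, since the counterexample is now recorded in $\Obs$ and any later hypothesis must admit a morphism from $\Obs$. The hypothesis is complete with at most $n \le N$ states, so its language lies in $\mathit{DFA}(N)$. By \cref{number of DFAs} at most $2^{(k-1+o(1))N\log N}$ distinct languages can be rejected, and rejected languages never recur.
\end{itemize}

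\textbf{Membership count.} Membership queries come from three sources.
\begin{itemize}
\item Counterexample processing: one query per prefix of each counterexample. This step is delicate, since counterexample lengths are not obviously bounded. I would argue that the bound counts only queries issued by the rules relative to the hypothesis space, or else use a shortest counterexample of length at most $N^2$ and absorb it.
\item Extension: at most $N$ basis states, each with $\sum_{\ell \le N} k^\ell = \bigO(k^N)$ suffixes. This gives at most $N k^{N} \le 2^{kN\log N}$ up to constants.
\item Identification: the queried states are successors of basis states, at most $kN$ of them. Each is paired with finitely many witnesses, at most one per apart pair in the basis. Over all stages the witnesses themselves come from earlier queries.
\end{itemize}
Multiplying $kN$ frontier states by the number of possible hypothesis rounds, bounded by $2^{kN\log N}$, yields $\bigO(kN\,2^{kN\log N})$.

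\textbf{Main obstacle.} The hardest step is the membership bound. It requires showing that, between consecutive validity queries, only $\bigO(kN)$ new queries are needed before validity is again applicable, and that the exponential factor comes from the number of validity rounds rather than from extension lengths. I would first try charging each extension and identification query to a pair consisting of a frontier transition and a hypothesis language. If that fails, I would combine the $Nk^N$ extension bound with the round bound, using $k^N \le 2^{kN\log N}$.
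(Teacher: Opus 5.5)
Your bound $n \le N$, the extension count $\bigO(Nk^N)$ and the bound on validity queries via the Ishigami--Tani lemma are fine and match the paper. The gap is the identification rule, which your argument handles neither for termination nor for the count. The rule may use \emph{any} witness $v$ with $v \vdash p \apart r$ for $p, r \in C(q)$. The pool of witnesses is not determined by the basis history: it grows whenever a query, including an identification query, gets a $+$ or $-$ answer.

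This breaks three of your steps. ``For a fixed basis history there are finitely many such words'' does not rule out an unbounded chain of identification queries that keep producing new witnesses. ``At most one witness per apart pair'' misreads the rule: after a $\square$ answer the learner may retry the same pair with another witness. Charging identification queries to ``$kN$ frontier states times the number of hypothesis rounds'' has no justification, since nothing forces a validity query after $\bigO(kN)$ other steps. As you concede, this is left open, so the membership bound is not proved.

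The missing idea is to separate informative from uninformative identification queries. Because $p, r \in C(q)$ and $v \vdash p \apart r$, a $+$ or $-$ answer to $\access(q)v$ makes $q$ \emph{newly} apart from $p$ or $r$. Apartness is monotone, and over the whole run there are at most $kN$ frontier states and $N$ basis states. So only $\bigO(kN^2)$ identification queries can be informative. A $\square$ answer creates no state with defined $F$, hence no new witness.

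Consequently the witness pool is generated only by the $+$/$-$ answers of three sources: extension queries ($\bigO(Nk^N)$), counterexample processing (charged per validity query, $\bigO(2^{kN\log N})$), and the $\bigO(kN^2)$ informative identification queries. An identification query is determined by a frontier state and a witness. This yields finiteness and, with the paper's accounting, the $\bigO(kN\,2^{kN\log N})$ bound at once.

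Two smaller points. First, take $N$ to be the size of a \emph{minimal} complete separating DFA rather than the minimal DFA for $L_1$. Your argument only uses that some complete separating DFA with $N$ states exists, and the larger $N$ gives a weaker statement. Second, the paper, like your first option, counts only extension and identification queries and charges counterexample processing per validity query. The shortest-counterexample option is not available, since the teacher chooses the counterexamples.
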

\begin{proof}
	If $L_1$ is regular then there is DFA $\A$ with $L(\A) = L_1$.  This DFA separates $L_1$ and $L_2$.  This implies that a minimal, complete separating DFA exists.
    Let $N$ be the number of states of a minimal, complete separating DFA for $L_1$ and $L_2$.
    We verify that each of the four rules can only be applied finitely often:
    \begin{itemize}
    	\item 
     	\emph{Promotion rule.}
     	By \Cref{extensionlemma} and \Cref{notsmallers}, the size of the basis is at most $N$, and thus the promotion rule can be applied at most $N - 1$ times.
     	\item 
     	\emph{Extension rule.}
    	The extension rule may explore, for each basis state, all outgoing paths of length at most $n - |S| + 1$.  Since $n$ is at most $N$ and $|S|$ is at least 1, the rule may explore all outgoing paths of length at most $N$ for each basis state.
    	Hence the extension rule may be applied at most
    	$N(1+ k + k^2 + \cdots + k^N) \in \bigO(N k^N)$ 
    	many times.
    	\item 
    	\emph{Validity rule.}
    	Since we increment $n$ whenever the SMT solver answers ``no'',
    	such a ``no'' answer may occur at most $N-1$ times during a run
    	of the algorithm.
    	In case of a ``yes'' answer, the learner poses a validity query.
        When it sends a hypothesis $\Hyp$ to the teacher and receives a counterexample $w$, then any $\Hyp'$ with $L(\Hyp') = L(\Hyp)$ is no longer a valid hypothesis after adding $w$ to $\tree$. From the first part of \cref{smtcorrect} we know that the SMT solver will only give a valid hypothesis. Thus,
        the total number of validity queries (and applications of the validity rule) is
        in $\bigO(2^{k N \log N})$, by \Cref{number of DFAs}.
    	\item
    	\emph{Identification rule.}
    	During each application of the identification rule, the learner performs a single membership query.
    	If this membership query does not return $\square$, then the candidate set of some successor of a basis state decreases in size. Since this can happen at most $\bigO(k N^2)$ times, from some moment on all membership queries performed during application of the identification rule will return $\square$.
    	This means that no new witnesses will be generated and therefore at some point we will have have explored all witnesses for all successors of all basis states, and no more applications of the identification rule are possible.
    	We may bound the number of applications of the identification rule as follows:
    	each identification query runs a witness starting from a frontier state.
    	The number of frontier states is bounded by $k N$.
    	The number of witnesses is bounded by the total number of queries resulting in a $+$ or $-$ answer, which in turn is bounded by the sum of
    	(a) the number of extension queries, which is in $\bigO(N k^N)$,
    	(b) the number of validity queries, which is in $\bigO(2^{k N \log N})$, and
    	(c) the number of identification queries resulting in a $+$ or $-$ answer, which is in $\bigO(k N^2)$.    	
    	Thus the number of identification queries is bounded by
    	$\bigO(kN 2^{k N \log N} )$.
\end{itemize}
The overall bound of $\bigO(kN 2^{k N \log N} )$ on the number of membership queries is directly implied by the derived bounds for the number of extension and identification queries.
\end{proof}

\section{Optimisations}
\label{optimisations}

\subsection{Promotion Based on Incompatibility}
\label{sec:promotion based on incompatibility}

The $\lsharpb$ algorithm maintains a basis consisting of states that are pairwise apart.  By \Cref{la: apartness refinement}, we know that if two states are apart, then they are incompatible, and thus correspond to different states of the minimal separating DFA that we are constructing.
The converse implication does not hold: incompatible states are not necessarily apart \cite{RW23}.

\begin{example}
	Consider the 3DFA $\C$ of \cref{Fig:Incompatible but not apart}. We claim that states $p$ and $q$ are incompatible but not apart.
	\begin{figure}[h] 
		\begin{center}
			\begin{tikzpicture}[->,>=stealth',shorten >=1pt,auto,node distance=1.5cm,main node/.style={circle,draw,font=\sffamily\large\bfseries},
				]
				\node[initial,accepting,state,frontier] (0) {\treeNodeLabel{$p$}};
				\node[state, frontier] (1) [above of=0] {\treeNodeLabel{$s$}};
				\node[state, frontier,dashed] (2) [right of=0] {\treeNodeLabel{$q$}};
				\node[state, accepting, frontier] (3) [right of=2] {\treeNodeLabel{$r$}};
				\node[state, accepting, frontier] (4) [above of=3] {\treeNodeLabel{$t$}};
				
				\path[every node/.style={font=\sffamily\scriptsize}]
				(0) edge node {$a$} (1)
				(0) edge node[below] {$b$} (2)
				(2) edge node[below] {$b$} (3)
				(3) edge node {$a$} (4)
				;
			\end{tikzpicture}
			\caption{States $p$ and $q$ are incompatible but not apart.}
			\label{Fig:Incompatible but not apart}
		\end{center}
	\end{figure}
	Clearly, states $p$ and $q$ are not apart: the only sequences that are enabled by both $p$ and $q$ are $\epsilon$ and $b$, but these do not lead to conflicting pairs of states.
	\ifshowappendix Appendix~\ref{sec:incompatibility} \else The full version \cite{LSV26report} \fi shows how RPNI state merging \cite{rpni} can be used to prove incompatibility of $p$ and $q$.
	Informally, the argument goes as follows.
	Assume $p$ and $q$ are compatible. 
	Then there is a morphism from $\C$ that maps $p$ and $q$ to the same state of some DFA $\B$.
	But then there also is a morphism to $\B$ from the 3NFA of \cref{Fig:Incompatible but not apart2} (left), obtained by merging $p$ and $q$.
	Because state $p$ now has two outgoing $b$ transitions and $\B$ is deterministic, there is also a morphism to $\B$ from the 3NFA of \cref{Fig:Incompatible but not apart2} (right), obtained by merging $p$ and $r$.
	This leads to a contradiction, as $p$ now has two outgoing $a$-transitions, one leading to an accepting state and one leading to a rejecting state: no morphism can map both $s$ and $t$ to the same state of $\B$.

	\begin{figure}[h] 
		\begin{center}
			\begin{tikzpicture}[->,>=stealth',shorten >=1pt,auto,node distance=1.5cm,main node/.style={circle,draw,font=\sffamily\large\bfseries},
				]
				\node[initial,accepting,state,frontier] (0) {\treeNodeLabel{$p$}};
				\node[state, frontier] (1) [above of=0] {\treeNodeLabel{$s$}};
				\node[state, accepting, frontier] (3) [right of=0] {\treeNodeLabel{$r$}};
				\node[state, accepting, frontier] (4) [above of=3] {\treeNodeLabel{$t$}};
				
				\path[every node/.style={font=\sffamily\scriptsize}]
				(0) edge node {$a$} (1)
				edge[loop below]  node[right] {$b$} (0)
				(0) edge node[below] {$b$} (3)
				(3) edge node {$a$} (4)
				;
			\end{tikzpicture}
			\hspace{1cm}
			\begin{tikzpicture}[->,>=stealth',shorten >=1pt,auto,node distance=1.5cm,main node/.style={circle,draw,font=\sffamily\large\bfseries},
				]
				\def\yoffset{8mm}
				\node[initial,accepting,state,frontier] (0) {\treeNodeLabel{$p$}};
				\node[state, frontier] (1) [above of=0] {\treeNodeLabel{$s$}};
				\node[state, accepting, frontier] (4) [right of=1] {\treeNodeLabel{$t$}};
				
				\path[every node/.style={font=\sffamily\scriptsize}]
				(0) edge node {$a$} (1)
				edge[loop below]  node[right] {$b$} (0)
				(0) edge node {$a$} (4)
				;
			\end{tikzpicture}
			\caption{Merging states $p$ and $q$ (left), and then merging $p$ and $r$ (right).}
			\label{Fig:Incompatible but not apart2}
		\end{center}
	\end{figure}
\end{example}

The example suggests an optimisation of $\lsharpb$ with a modified promotion rule, in which a state may be added to the basis when it is incompatible with all current basis states. In this way we maintain the crucial invariant that all states in the basis correspond to different states of the hidden DFA $\B$.
A key advantage of apartness over incompatibility is that witnesses for the apartness of basis states may subsequently be used to identify frontier states via the identification rule.
Therefore, whenever we add a state to the basis that is not apart from some basis state, it may pay off to perform some additional queries to establish apartness of incompatible basis states. This requires that the replies for these queries is either ``$+$'' or ``$-$'', which is not guaranteed in our setting.
In the example of \cref{Fig:Incompatible but not apart}, a ``$+$'' or ``$-$'' reply to query $b a$ establishes apartness of $p$ and $q$.
For details we refer to
\ifshowappendix
Appendix~\ref{sec:incompatibility}
\else
the full version of this article \cite{LSV26report}.
\fi

\subsection{Replacing Basis States}
During a run of the algorithm, it may occur that we add a state to the basis that is disconnected from the other basis states. Such a disconnected basis state can potentially be quite deep in the observation tree. However, it is often the case that nodes that are not as deep in the observation tree are more likely to be informative compared to nodes that are deeper in the tree. Furthermore, in practice, performing queries for words with fewer symbols can be faster than when using more symbols. Thus, it is desirable to choose a basis such that its states are as close as possible to the root of the observation tree.
To achieve this, we allow replacing deep basis states with closer basis states: whenever $q \in Q \setminus S$ with $C(q) = \set{p}$ and $\abs{\access(q)} < \abs{\access(p)}$, we may replace $p$ with $q$ in $S$.

\section{Evaluation}
\label{evaluation}
To evaluate the performance of the \lsharps algorithm, we implemented it in the AALpy~\cite{aalpygit} library for Python, and ran it against two groups of benchmarks, which we describe in the next subsections. 
%
The Oliveira and Silva experiments were performed on an AMD Ryzen 7 7800X3D processor with 32 GB of memory. The bug description experiments were performed on an AMD Epyc 7642 processor with 2 GB of memory.

\subsection{Oliveira and Silva Benchmarks}
We use the benchmarks of Oliveira and Silva~\cite{benchmarks}, which were also used to evaluate \lstars. These benchmarks are constructed as follows: For each size from 4 states to 23 states, 19 random DFAs are generated over the alphabet $\{ 0, 1 \}$. For each DFA, there are 5 sets of accepted and rejected words, each of these sets consisting of 20 random strings of length 30 and all their prefixes, along with the corresponding output of the DFA. This produces 95 benchmarks for each DFA size, for a total of 1900 benchmarks.

\lsharps was able to complete 1805 out of the 1900 benchmarks within 200 seconds, while \lstars was able to complete 1073 benchmarks. A table of the number of completed benchmarks is shown in \cref{fig:nocompleted}. For contrast, Heule and Verwer~\cite{passive} were able to complete all benchmarks of sizes 4 to 21 within 200 seconds. However, their algorithm requires knowing all observations from the start, which in practice is an easier problem to solve.

A direct comparison of the running time, number of membership queries, and number of validity queries for all completed benchmarks is shown in \cref{vs_running_time,vs_queries,vs_validity}. These graphs show the 25th, 50th, 75th and 100th percentile for each benchmark. Since the 0th percentile is very close to 0 each time, it is not included. The running time for \lsharps is dominated by the time spent by the SMT solver, taking over 95\% of the total for most benchmarks. We see that \lsharps shows significant improvement in the running time, number of membership queries, and number of validity queries.

\begin{figure}
\begin{center}
    \begin{tabular}{c|c|c|c|c|c|c|c|c|c|c}
        Benchmark & 4 & 5 & 6 & 7 & 8 & 9 & 10 & 11 & 12 & 13 \\
        \hline
        \lstars & 95 & 95 & 94 & 95 & 94 & 94 & 91 & 81 & 65 & 65 \\
        \lsharps & 95 & 95 & 95 & 95 & 95 & 95 & 95 & 95 & 95 & 95
    \end{tabular}
    
    \vspace{1ex}
    
    \begin{tabular}{c|c|c|c|c|c|c|c|c|c|c}
        Benchmark & 14 & 15 & 16 & 17 & 18 & 19 & 20 & 21 & 22 & 23 \\
        \hline
        \lstars & 48 & 34 & 30 & 23 & 14 & 22 & 4 & 14 & 11 & 4 \\
        \lsharps & 95 & 95 & 94 & 94 & 94 & 85 & 89 & 81 & 74 & 54
    \end{tabular}
\end{center}
\caption{Number of completed benchmarks for \lstars and \lsharps}
\label{fig:nocompleted}
\end{figure}
\begin{figure}
    \centering
    \includegraphics[width=\linewidth,alt={This plot shows, for each benchmark size in the range from $4$ to $23$, the running times in seconds on an exponential scale ranging from $10^{-2}$ to $2 \cdot 10^2$.  For each DFA size, the graph shows the 25th, 50th, 75th and 100th percentile for \lsharps and \lstars.  For all DFA sizes, \lsharps is faster than \lstars with up to two orders of magnitude.}]{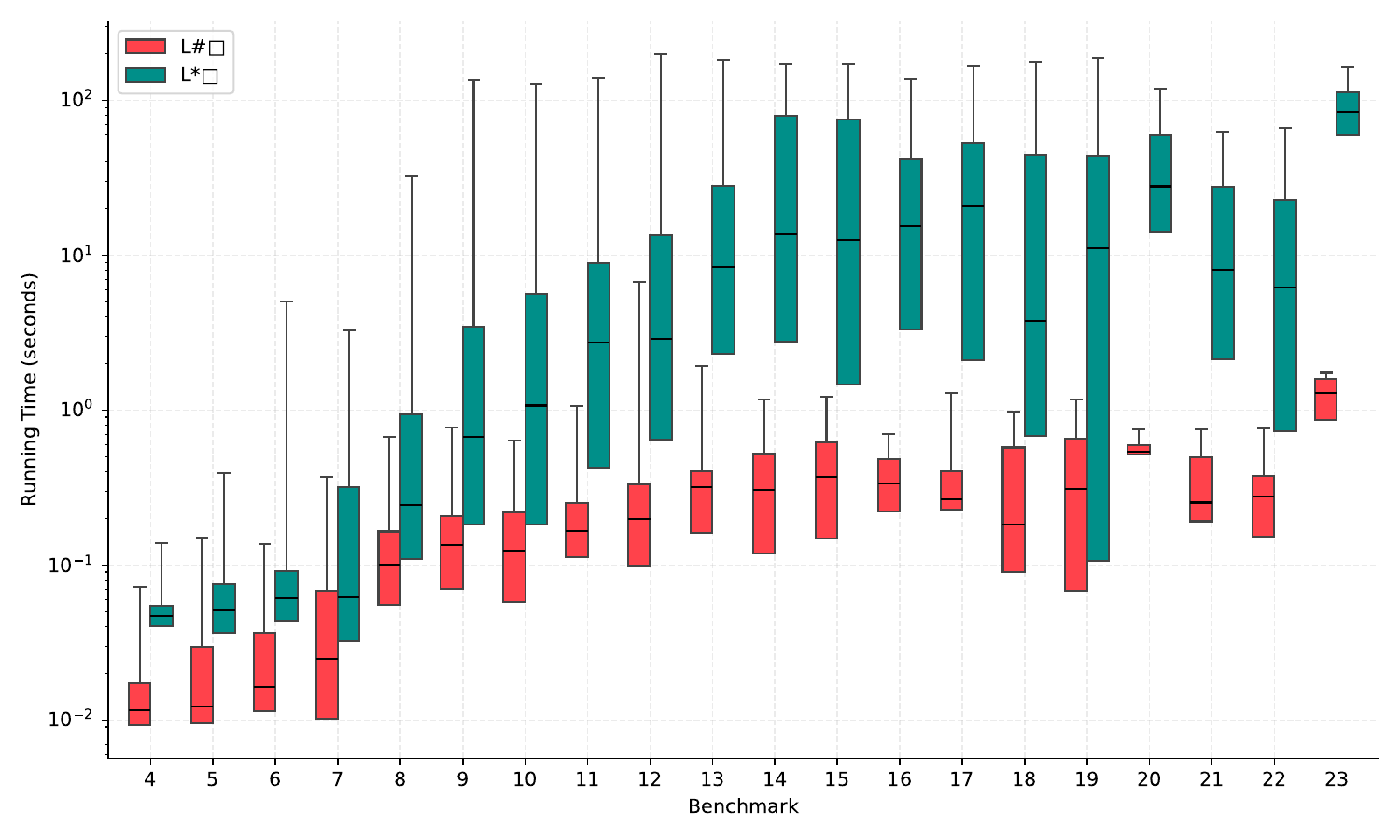}
    \caption{Comparison of running time of \lsharps and \lstars}
    \label{vs_running_time}
\end{figure}
\begin{figure}
    \centering
    \includegraphics[width=\linewidth,alt={This plot shows, for each benchmark size in the range from $4$ to $23$, the number of membership queries required to learn it on an exponential scale ranging roughly  $10^1$ to $2 \cdot 10^4$.  For each DFA size, the graph shows the 25th, 50th, 75th and 100th percentile for \lsharps and \lstars.  With the exception of minimal size 4, \lsharps is faster than \lstars.  From size 10 onwards, the difference is at least one order of magnitude.}]{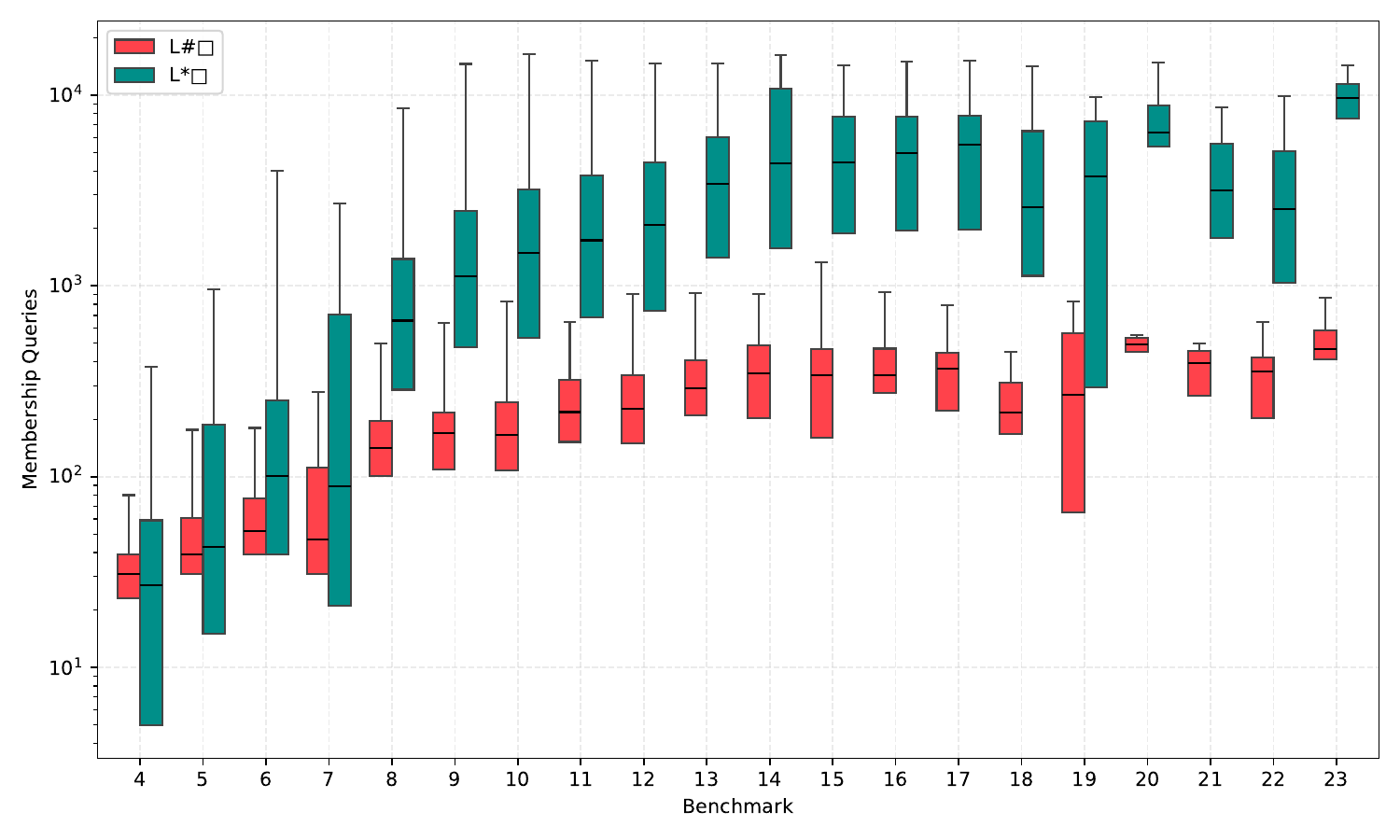}
    \caption{Comparison of membership queries of \lsharps and \lstars}
    \label{vs_queries}
\end{figure}
\begin{figure}
    \centering
    \includegraphics[width=\linewidth,alt={This plot shows, for each benchmark size in the range from $4$ to $23$, the number of validity queries required to learn it, which ranges from $1$ to $25$.  For each DFA size, the graph shows the 25th, 50th, 75th and 100th percentile for \lsharps and \lstars.  For all DFA sizes, \lsharps requires fewer queries than \lstars.  The number of required validity queries varies a lot for different benchmarks.}]{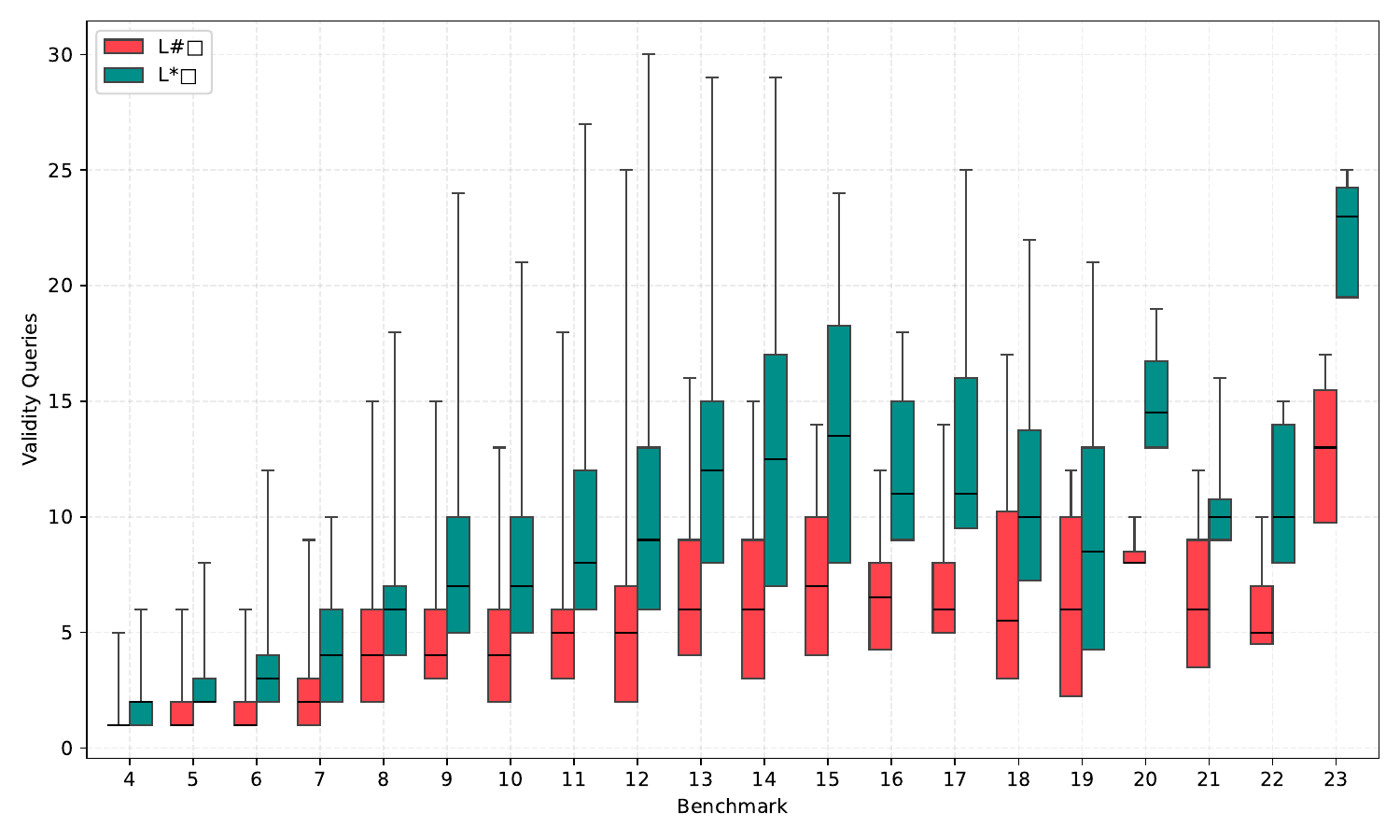}
    \caption{Comparison of validity queries of \lsharps and \lstars}
    \label{vs_validity}
\end{figure}

\subsection{Learning Bug Description DFAs}
We use the setup from~\cite{YaacovWAH25}, which is based on industrial benchmarks of control software components of ASML's TWINSCAN lithography machines from the RERS challenge 2019~\cite{rers}. These models are given in obfuscated Java code, each containing a bug. Our goal is to learn a DFA that accepts any valid input that leads to this bug, and rejects any valid input that does not lead to this bug.

To show the application of such a teacher, we consider a setting with a real online store system~\cite{store}. This store has actions for starting a session, logging in and out, adding to cart and removing from cart, and checking out. This store contains a bug, causing it to crash whenever an item is removed from the cart after adding it twice. Giving a direct description of this bug, where all inputs leading to a crash are accepted and all other inputs are rejected, leads to a DFA with 20 states. This DFA has to account for many behaviours that do not directly correspond to the bug, such as handling logins and logouts. Instead, using an incomplete teacher, we learn the DFA shown in \cref{fig:smallbug}, which exactly described the bug from earlier.

\begin{figure}[h] 
	\begin{center}
		\begin{tikzpicture}[shorten >=1pt, node distance=3cm, on grid, auto]
          \node[state, initial] (q0) {$q_0$};
          \node[state] (q1) [right=of q0] {$q_1$};
          \node[state] (q2) [right=of q1] {$q_2$};
          \node[state, accepting] (q3) [right=of q2] {$q_3$};
        
          \path[->]
            (q0) edge node {$\text{Add}(x)$} (q1)
            (q1) edge node {$\text{Add}(x)$} (q2)
            (q2) edge node {$\text{Remove}(x)$} (q3)
            (q0) edge[loop above] node {$\Sigma \setminus \text{Add}(x)$} ()
            (q1) edge[loop above] node {$\Sigma \setminus \text{Add}(x)$} ()
            (q2) edge[loop above] node {$\Sigma \setminus \text{Remove}(x)$} ()
            (q3) edge[loop above] node {$\Sigma$} ();
        \end{tikzpicture}
		\caption{Concise bug description of online store using an incomplete teacher}
        \label{fig:smallbug}
	\end{center}
\end{figure}

Yaacov et al.~\cite{YaacovWAH25} use a modified version of \lstar based on \cite{LeuckerN12,ChenFCTW09} to first learn a 3DFA, and then uses the RPNI algorithm~\cite{rpni} to construct a small DFA. With \lsharps, we are able to directly learn a minimal DFA, finding very similar bug description models more efficiently. A table with the results is shown in \cref{fig:lstar3dfa}. These results are taken as the mean of 10 runs. Besides the performance benefit, using \lsharps or other solver based algorithms has another benefit, that is, we can add additional constraints to the solver which might lead to more interpretable DFAs. For example, we can ask the solver to maximise the number of self-transitions.

Sometimes, the model learned by \lsharps is larger than the model learned by 3DFA-\lstar, even though \lsharps theoretically learns minimal models. This can be explained by the fact that we do not have access to a perfect validity oracle in this setting and instead, following \cite{YaacovWAH25}, we used a randomized version of the W-method \cite{Ch78,Vas73} implemented in AALpy. This can lead to the teacher sometimes not finding a counterexample even if it does exist, and cause a smaller model to be accepted.

\begin{figure}
\centering
\begin{tabular}{c|cr|cr||c|cr|cr}
     & 3DFA-\lstar & & \lsharps & 
     &  & 3DFA-\lstar & & \lsharps & \\
    Benchmark & size & time (s) & size & time (s)
    & Benchmark & size & time (s) & size & time (s) \\
    \hline
    m24 & 4 & 128 & 2 & 21
    & m22 & 5 & 271 & 3 & 41 \\
    m45 & 4 & 95 & 4 & 21
    & m27 & 4 & 81 & 3 & 38 \\
    m54 & 5 & 1098 & 5 & 209
    & m41 & 4 & 182 & 4 & 71 \\
    m55 & 7 & 839 & 3 & 32
    & m106 & 4 & 219 & 3 & 17 \\
    m76 & 4 & 668 & 3 & 34
    & m131 & 2 & 97 & 2 & 71 \\
    m95 & 4 & 73 & 3 & 19
    & m132 & 3 & 14605 & 3 & 909 \\
    m135 & 4 & 181 & 2 & 15
    & m167 & 3 & 112 & 3 & 22 \\
    m158 & 4 & 31 & 4 & 17
    & m173 & 4 & 1460 & 3 & 122 \\
    m159 & 4 & 41 & 4 & 24
    & m182 & 4 & 1373 & 2 & 44 \\
    m164 & 2 & 30 & 2 & 13
    & m189 & 3 & 71 & 2 & 28 \\
    m172 & 4 & 90 & 2 & 10
    & m196 & 9 & 18824 & 4 & 75 \\
    m181 & 4 & 17584 & 2 & 337
    & m199 & 4 & 109 & 3 & 24 \\
    m183 & 4 & 161 & 6 & 414
    &  &  &  &  \\
    m185 & 4 & 665 & 3 & 24
    &  &  &  &  \\
    m201 & 4 & 469 & 3 & 25
    &  &  &  &  \\
\end{tabular}
\caption{Comparison of \lstar for 3DFAs and \lsharps}
\label{fig:lstar3dfa}
\end{figure}

\subsection{Evaluation of Optimisations}
To evaluate the performance of the optimisations of \cref{optimisations} and the redundant clauses of the SMT solver, we performed an ablation-style study
for the benchmarks of Oliveira and Silva~\cite{benchmarks}.

The use of redundant clauses in the SMT encoding leads to a significant improvement in the running time\ifshowappendix, as shown in \cref{vs_no_red}\fi. This effect increases for larger benchmarks. For benchmarks of sizes 12 and above, the algorithm already becomes impractical without the redundant clauses.

Replacing basis states with states that have shorter access sequences leads to a small improvement, and reduces the number of input symbols in membership queries with $7\%$\ifshowappendix, as shown in \cref{vs_no_replace}\fi.

Finally, requiring compatibility of all pairs of basis states, rather than apartness, does not lead to a significant improvement. The running time for the version with compatibility is only $2\%$ faster on average\ifshowappendix, as shown in \cref{vs_comp}\fi. Other measurements also do not show any significant improvements.
\ifshowappendix
$\mbox{}$
\else

We refer to the report version of this article \cite{LSV26report} for additional details.
\fi

\section{Discussion}
\label{sec:discussion}

We introduced $\lsharpb$, a simple algorithm for learning a minimal separating DFA using membership and validity queries.
We see several possibilities for further improving $\lsharpb$, e.g., the SMT encoding can be further optimized using ideas from \cite{passive} and/or incremental SMT solving, and we may develop heuristics to predict for which membership queries we can expect a useful ($+$ or $-$) answer by learning an approximation of $L_1 \cup L_2$.

When Angluin \cite{Ang87} introduced the notion of \emph{Minimally Adequate Teacher (MAT)}, she argued that a teacher is only ``adequate'' when it is possible for a learner to learn a concept efficiently using only a polynomial number of queries.
Even when $L_1$ and $L_2$ are regular, we have no polynomial bound on the number of queries required by $\lsharpb$, and indeed it is an open problem whether teachers that answer membership+validity queries are adequate, in the sense that they allow a learner to learn a separating DFA with a polynomial number of queries.
So one might argue that the name {\sf iMAT} for the learning framework of \cite{lstars} was chosen prematurely.
Adequacy is obtained trivially when the teacher, besides membership queries, also answers equivalence queries for $L_1$ and $\overline{L_2}$.  A learner may then first learn DFAs for $L_1$ and $L_2$ using a state-of-the-art DFA learner, and then compute a separating automaton using an SMT solver.
The query complexity of such an approach is $\bigO(| \Sigma | (n_1 + n_2)^2 + (n_1 + n_2) \log m)$, where $n_1$ and $n_2$ are the number of states of the minimal DFAs for $L_1$ and $L_2$, respectively, and $m$ is the length of the longest counterexample.
As equivalence queries are incomparable with validity queries, one might argue that membership+equivalence queries provide a MAT framework for learning separating automata.
However, a naive algorithm that uses equivalence queries is impractical as it requires one to solve a big instance of an NP-hard problem.
Clearly, the trade off between query and time complexity of learning algorithms deserves further study.

Even in a restricted setting where no ``don't care'' answer is ever returned, the $\lsharpb$ and \lsharp algorithms behave differently. Whereas \lsharp poses extra queries (using binary search) to analyze counterexamples, $\lsharpb$ does not have or need counterexample processing. Also, unlike \lsharp, $\lsharpb$ does not require the basis to be prefix closed. Whereas \lsharp extends the basis with a single input at a time, $\lsharpb$ may use longer sequences.  It will be interesting to compare both algorithms in a setting without ``don't care'' answers, but we expect \lsharp to be far more efficient in terms of the required number of queries.

An obvious direction for future research is to extend our new algorithm to richer model classes such as Mealy machines, symbolic automata \cite{DAntoniV21}, register automata \cite{DierlFHJST24}, weighted automata \cite{BalleCLQ14}, and automata with timers \cite{BruyereGPSV25}.
The SMT encoding used by $\lsharpb$ for hypothesis construction only involves propositional logic and uninterpreted functions, but the
ability of SMT solvers to also handle real numbers, integers, and various data structures may effectively support learning of richer automata classes, even in settings without ``don't care'' answers.

\subsubsection*{Acknowledgements.}
We thank Sebastian Junges for a fruitful discussion on an earlier version of this work, and the anonymous reviewers for their valuable feedback.
This work is partially sponsored by NWO project OCENW.M.23.155, Evidence-Driven Black-Box Checking (EVI).

\subsubsection*{Disclosure of Interests.}
The authors have no competing interests to declare that are relevant to the content of this article.

\subsubsection*{Data-Availability Statement.}
We implemented the \lsharps algorithm in the AALpy library~\cite{aalpygit} for Python, using Z3 ~\cite{z3} as SMT solver.
The source code is available at \url{https://github.com/JLaumen/l-sharp-square-algorithm}.
The artifact available at \url{https://doi.org/10.5281/zenodo.19816690} makes it possible to reproduce all the experiments described in this article. The artifact consists of two Docker files that may be used to reproduce our results for the benchmarks of Oliveira and Silva~\cite{benchmarks}, and the benchmarks from~\cite{YaacovWAH25} for learning bug descriptions, respectively.

\bibliographystyle{splncs04}
\bibliography{references}

\ifshowappendix
\newpage
\appendix
\section{Figures for Ablation Study}
\label{sec:ablationfigures}

\begin{figure}
	\centering
	\includegraphics[width=\linewidth]{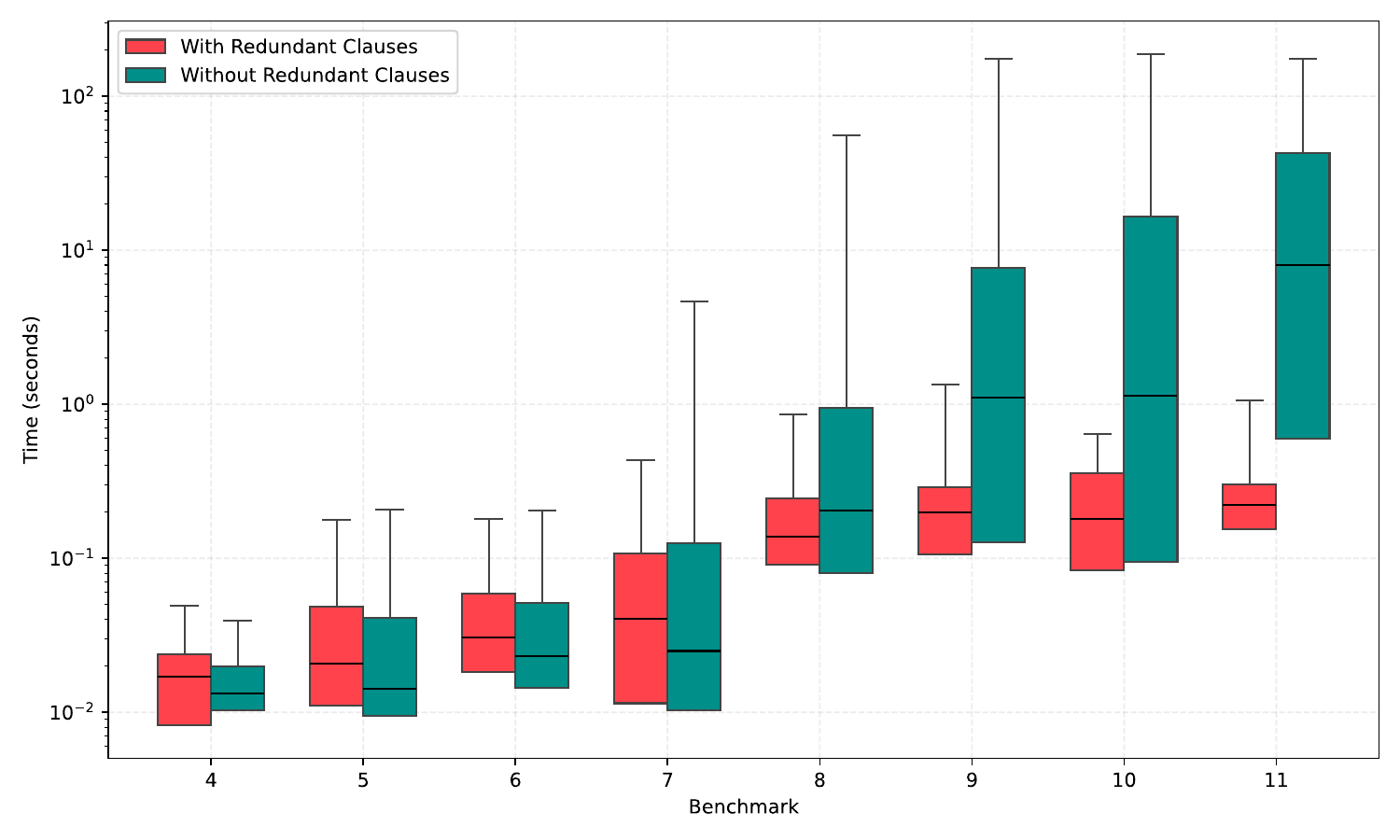}
	\caption{Comparison of running time of redundant clauses}
	\label{vs_no_red}
\end{figure}

\begin{figure}
	\centering
	\includegraphics[width=\linewidth]{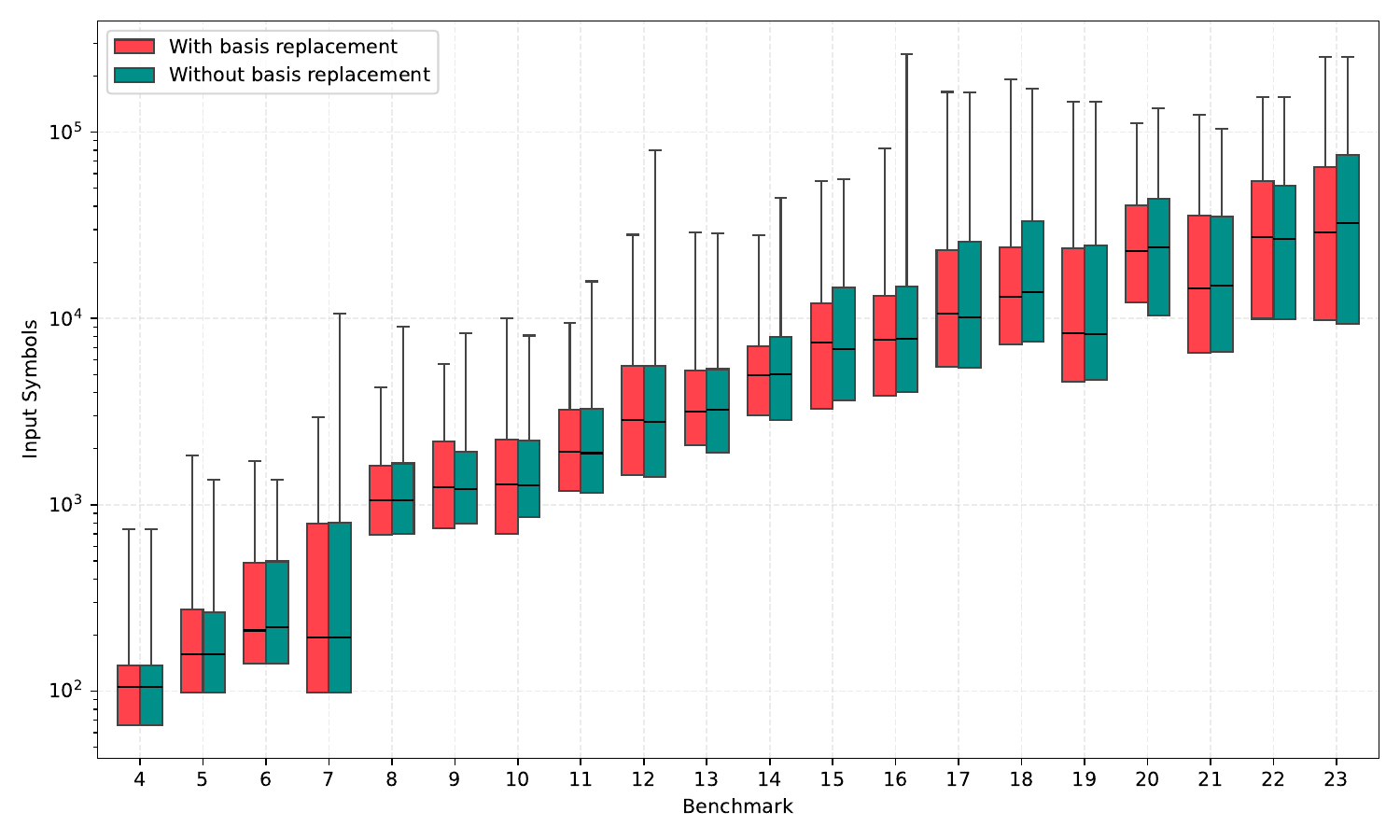}
	\caption{Comparison of input symbols of basis replacement}
	\label{vs_no_replace}
\end{figure}

\begin{figure}
	\centering
	\includegraphics[width=\linewidth]{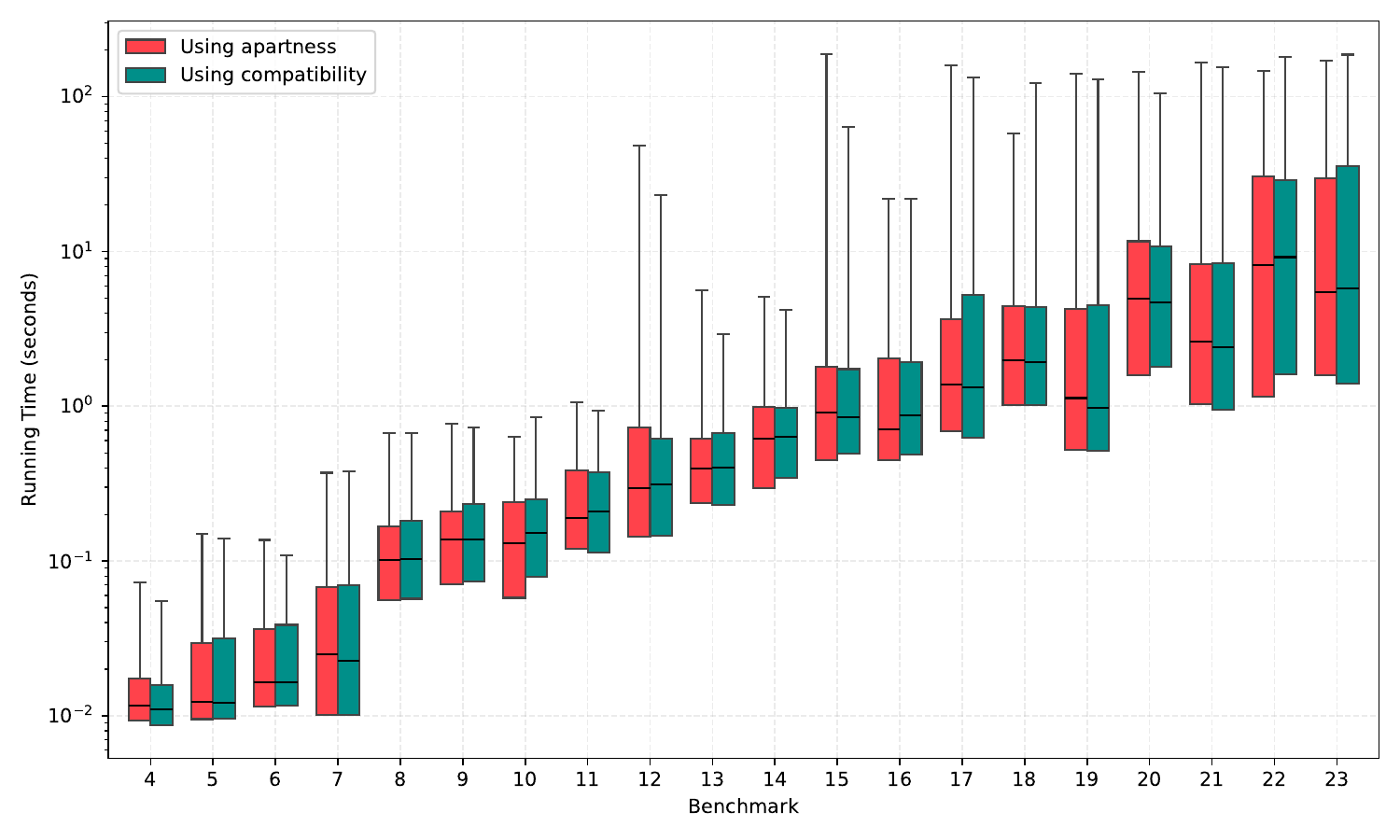}
	\caption{Comparison of running time of apartness and compatibility}
	\label{vs_comp}
\end{figure}

\newpage
\section{State Incompatibility and Apartness}
\label{sec:incompatibility}

For every 3NFA $\A$, an equivalent 3DFA can be constructed using the classical subset construction \cite{HU79}, but this may lead to an exponential blowup of the number of (reachable) states.
In this appendix, we consider an alternative, stronger notion of determinizability: the existence of a 3DFA $\B$ with $\A \sqsubseteq_f \B$, for some $f$.
W.l.o.g.\ we may assume that all states of $\B$ are in the image of $f$, which implies that the number of states of $\B$ is less than or equal to the number of states of $\A$.
Not every 3NFA is strongly determinizable and we prove how, using the RPNI state merging algorithm \cite{rpni}, we may efficiently decide strong determinizability.

As a corollary, we also obtain an algorithm for deciding incompatibility.
We then show how, based on a run of the RPNI algorithm that establishes incompatibility of states $p$ and $q$, we may compute a collection of membership queries that, when answered with a ``yes'' or ``no'''answer by the teacher, establish apartness of $p$ and $q$.

\begin{definition}
	An 3NFA $\A$ is called \emph{strongly determinizable} if there exists a 3DFA $\B$ such that $\A \sqsubseteq \B$.
\end{definition}

Note that, instead of requiring existence of a 3DFA, it is equivalent to require existence of a DFA in the above definition: (a)  Since any DFA is a 3DFA, existence of a DFA $\B$ with $\A \sqsubseteq \B$ trivially implies existence of a 3DFA $\B$ with $\A \sqsubseteq \B$. (b) Conversely, suppose that $f$ is a morphism from $\A$ to a 3DFA $\B$.  Let $\B'$ be the DFA obtained from $\B$ by making all states with unknown status accepting.
Note that for any state $q$ of $\B$ with unknown status, all states of $\A$ that are mapped to $q$ by $f$ also have unknown status.  Hence $f$ is also a morphism from $\A$ to $\B'$.

\begin{lemma}
	\label{lemma not mergeable implies not SD}
	Suppose $\A$ is an 3NFA with states $q$, $q'$ and $q''$ such that $q \xrightarrow{a} q'$ and $q \xrightarrow{a} q''$, for some $a \in \Sigma$, and $q'$ and $q''$ are conflicting.
	Then $\A$ is not strongly determinizable.
\end{lemma}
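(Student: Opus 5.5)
The plan is a direct proof by contradiction, using only the definition of a morphism (\cref{def functional simulation}) and determinism of the target. Suppose, towards a contradiction, that $\A$ is strongly determinizable. Then there is a 3DFA $\B$ and a morphism $f$ with $\A \sqsubseteq_f \B$.

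First we use the transition condition of a morphism. Since $q \xrightarrow{a}_{\A} q'$ and $q \xrightarrow{a}_{\A} q''$, we get $f(q) \xrightarrow{a}_{\B} f(q')$ and $f(q) \xrightarrow{a}_{\B} f(q'')$. Because $\B$ is deterministic, this forces $f(q') = f(q'')$; call this state $r$.

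Next we use the final state condition. Since $q'$ and $q''$ are conflicting, both $F_{\A}(q')$ and $F_{\A}(q'')$ are defined and $F_{\A}(q') \neq F_{\A}(q'')$. Condition 2 of a morphism then gives $F_{\A}(q') = F_{\B}(r)$ and $F_{\A}(q'') = F_{\B}(r)$. Hence $F_{\A}(q') = F_{\A}(q'')$, which contradicts the assumption that $q'$ and $q''$ are conflicting. So no such $\B$ and $f$ exist, and $\A$ is not strongly determinizable.

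There is no real obstacle here. The only care needed is in the second step, where we must note explicitly that conflicting means both values of $F_{\A}$ are defined, so that condition 2 of a morphism actually applies to both $q'$ and $q''$. If helpful, the second step can instead be phrased as the auxiliary fact, used in the proof of \cref{la: apartness refinement}, that conflicting states are never identified by a morphism.
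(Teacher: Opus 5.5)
Your proof is correct and matches the paper's argument step for step. You assume a morphism $f$ into a 3DFA $\B$, use determinism of $\B$ to get $f(q') = f(q'')$, and then apply the final-state condition to the two conflicting states to reach a contradiction; the only cosmetic difference is that the paper states the contradiction as $f(q') \neq f(q'')$ rather than $F_{\A}(q') = F_{\A}(q'')$.
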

\begin{proof}
	By contradiction.  Assume that $\A$ is strongly determinizable.
	Then there exists a 3DFA $\B$ and a morphism $f$ from $\A$ to $\B$.
	Since $f$ is a morphism,
	$f(q) \xrightarrow{a}_{\B} f(q')$ and $f(q) \xrightarrow{a}_{\B} f(q'')$.
	Using that $\B$ is deterministic gives $f(q') = f(q'')$.
	Since $q'$ and $q''$ are conflicting, we conclude that
	$F_{\A}(q')\converges$, $F_{\A}(q'')\converges$, and $F_{\A}(q') \neq F_{\A}(q'')$.
	Using again that $f$ is a morphism, we derive
	\[
	F_{\B}(f(q')) = F_{\A}(q') \neq F_{\A}(q'') = F_{\B}(f(q''))
	\]
	Thus $f(q') \neq f(q'')$
	This contradicts our earlier observation that $f(q') = f(q'')$.
\end{proof}

The $\mathsf{merge}$ function takes as arguments a 3NFA $\A$ and two nonconflicting states $p$ and $q$, and returns a 3NFA in which these states are merged together.

\begin{definition}[State merging]
	\label{def state merging}
	Let $\A$ be an 3NFA with distinct, nonconflicting states $p$ and $q$.
	Then $\statemerge{\A}{p}{q}$ is the 3NFA $\B$ with:
	\begin{itemize}
		\item 
		$Q_{\B} = Q_{\A} \setminus \{ q \}$,
		\item 
		$q^0_{\B} = \begin{cases}
			p & \mbox{if } q = q^0_{\A}\\
			q^0_{\A} &\mbox{otherwise}
		\end{cases}$,
		\item 
		For all $r \in Q_{\B}$, $F_{\B}(r) = \begin{cases}
			F_{\A}(q) & \mbox{if } r=p \wedge F_{\A}(q)\converges\\
			F_{\A}(r) & \mbox{otherwise}
		\end{cases}$,
		\item 
		Let $\alpha \colon Q_{\A} \to Q_{\B}$ be the function that maps each state of $Q_{\A}$ to itself, except for $q$ which is mapped to $p$. Then $\rightarrow_{\B}$ is the smallest relation such that, for all $q, q'\in Q_{\A}$ and $a\in \Sigma$,
		$q \xrightarrow{a}_{\A} q'$ implies $\alpha(q) \xrightarrow{a}_{\B} \alpha(q')$. 
	\end{itemize}
\end{definition}

\begin{lemma}
	\label{state merging functional simulation}
	Let $\A$ be an 3NFA with distinct, nonconflicting states $p$ and $q$, and let
	$\alpha \colon Q_{\A} \to Q_{\B}$ be the function used in \Cref{def state merging}.
	Then $\A \sqsubseteq_{\alpha} \statemerge{\A}{p}{q}$.
\end{lemma}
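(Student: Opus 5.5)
The plan is to check the three clauses of \Cref{def functional simulation} directly for $f = \alpha$, writing $\B = \statemerge{\A}{p}{q}$. Note first that $\alpha$ is a total function from $Q_{\A}$ to $Q_{\B}$. Since $p \neq q$, we have $p \in Q_{\B} = Q_{\A} \setminus \{q\}$, and $\alpha$ sends every other state to itself, which also lies in $Q_{\B}$.

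\emph{Initial state.} We distinguish two cases. If $q = q^0_{\A}$, then $\alpha(q^0_{\A}) = p = q^0_{\B}$. Otherwise $\alpha(q^0_{\A}) = q^0_{\A} = q^0_{\B}$.

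\emph{Transitions.} This clause is immediate. By definition, $\rightarrow_{\B}$ is the smallest relation containing $\alpha(r) \xrightarrow{a}_{\B} \alpha(r')$ for every transition $r \xrightarrow{a}_{\A} r'$, so every such image transition is present.

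\emph{Final-state function.} This is the only clause with real content, since it is where nonconflictingness is used. Suppose $F_{\A}(r)\converges$; we must show $F_{\B}(\alpha(r)) = F_{\A}(r)$.
\begin{itemize}
\item If $r \notin \{p, q\}$, then $\alpha(r) = r \neq p$, and $F_{\B}(r) = F_{\A}(r)$ by the ``otherwise'' case of the definition.
\item If $r = q$, then $\alpha(r) = p$. Since $F_{\A}(q)\converges$, the first case of the definition applies and gives $F_{\B}(p) = F_{\A}(q)$.
\item If $r = p$, there are two subcases. When $F_{\A}(q)\diverges$, the definition gives $F_{\B}(p) = F_{\A}(p)$. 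When $F_{\A}(q)\converges$, both $F_{\A}(p)$ and $F_{\A}(q)$ are defined. Because $p$ and $q$ are nonconflicting, $F_{\A}(q) = F_{\A}(p)$, and hence $F_{\B}(p) = F_{\A}(q) = F_{\A}(p)$.
\end{itemize}

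I expect the $r = p$, $F_{\A}(q)\converges$ subcase to be the only place requiring care, as it is the sole use of the nonconflicting hypothesis. Together the three clauses give $\A \sqsubseteq_{\alpha} \statemerge{\A}{p}{q}$.
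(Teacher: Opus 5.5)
Your proposal is correct and matches the paper's proof: both verify the three morphism clauses directly for $\alpha$, with the initial-state clause split on whether $q = q^0_{\A}$ and the transition clause immediate from the definition of $\rightarrow_{\B}$. For the final-state clause, your four cases are a tidier version of the paper's seven partly overlapping ones, and the nonconflicting hypothesis is used in the same place (the case $r = p$ with $F_{\A}(q)\converges$).
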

\begin{proof}
	Let $\B = \statemerge{\A}{p}{q}$.
	We check that function $\alpha$, satisfies the conditions of a morphism (for all $r, r'\in Q_{A}$ and $a \in \Sigma$):
	\begin{enumerate}
		\item $\alpha(q^0_{\A}) = q^0_{\B}$.  There are two cases:
		\begin{itemize}
			\item 
			If $q = q^0_{\A}$ then $q^0_{\B} = p = \alpha(q) = \alpha(q^0_{\A})$.
			\item
			If $q \neq q^0_{\A}$ then  $q^0_{\B} = q^0_{\A} = \alpha(q^0_{\A})$.
		\end{itemize}
		\item $F_{\A}(r)\defined$ $\implies$ $F_{\A}(r) = F_{\B}(\alpha(r))$. There are seven cases:
		\begin{enumerate}
			\item 
			$r \neq p \wedge r  \neq q$.
			Then $F_{\B}(r) = F_{\A}(r)$, by definition of $F_{\B}$.
			Also, $F_{\B}(r) = F_{\B}(\alpha(r))$, by definition of $\alpha$. Hence $F_{\A}(r) = F_{\B}(\alpha(r))$.
			\item 
			$r=p \wedge F_{\A}(p) \diverges$.
			Trivial, since $F_{\A}(r) \diverges$.
			\item 
			$r=q \wedge F_{\A}(q) \diverges$.
			Trivial, since $F_{\A}(r) \diverges$.
			\item 
			$r=p \wedge F_{\A}(q) \diverges$.
			Then $F_{\B}(r) = F_{\B}(p) = F_{\A}(p)$, by definition of $F_{\B}$.  
			Also, $F_{\B}(r) = F_{\B}(\alpha(r))$, by definition of $\alpha$. Hence $F_{\A}(r) = F_{\B}(\alpha(r))$.
			\item 
			$r=q \wedge F_{\A}(p)\diverges \wedge F_{\A}(q)\converges$.
			Then  $F_{\B}(p) = F_{\A}(q)$, by definition of $F_{\B}$.
			Also, $F_{\B}(p) = F_{\B}(\alpha(q))$, by definition of $\alpha$. Hence $F_{\A}(r) = F_{\B}(\alpha(r))$.
			\item 
			$F_{\A}(p)\converges \wedge F_{\A}(q)\converges \wedge r = p$. Then $F_{\A}(p) = F_{\A}(q)$, since $p$ and $q$ are nonconflicting.
			Moreover $F_{\A}(q) = F_{\B}(p)$, by definition of $F_{\B}$.
			Also, $F_{\B}(p) = F_{\B}(\alpha(p))$, by definition of $\alpha$. By combining these observations, we obtain $F_{\A}(r) = F_{\B}(\alpha(r))$.
			\item 
			$F_{\A}(p)\converges \wedge F_{\A}(q)\converges \wedge r = q$. Then $F_{\A}(q) = F_{\B}(p)$, by definition of $F_{\B}$.
			Also, $F_{\B}(p) = F_{\B}(\alpha(q))$, by definition of $\alpha$. By combining these observations, we obtain $F_{\A}(r) = F_{\B}(\alpha(r))$.
		\end{enumerate}
		\item $r \xrightarrow{a}_{\A} r'$ $\implies$ $\alpha(r) \xrightarrow{a}_{\B} \alpha(r')$. This is immediate from the definition of the transition relation of $\statemerge{\A}{p}{q}$.
	\end{enumerate}
\end{proof}

\begin{lemma}
	\label{lemma identified by simulation implies mergeable}
	Suppose $\A$ and $\B$ are 3NFAs, $\A \sqsubseteq_f \B$, and $p, q \in Q_{\A}$ are distinct states with $f(p)=f(q)$. 
	Then $p$ and $q$ are nonconflicting.
\end{lemma}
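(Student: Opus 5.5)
The argument is a direct contradiction proof that uses only the second condition of a morphism. Assume $\A \sqsubseteq_f \B$ with $f(p) = f(q)$, and suppose, for contradiction, that $p$ and $q$ are conflicting.

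By the definition of conflicting, $F_{\A}(p)\converges$, $F_{\A}(q)\converges$ and $F_{\A}(p) \neq F_{\A}(q)$. Since $F_{\A}(p)$ is defined, condition 2 of \Cref{def functional simulation} gives $F_{\B}(f(p)) = F_{\A}(p)$. In the same way, $F_{\B}(f(q)) = F_{\A}(q)$. Both values of $F_{\B}$ are defined here, because condition 2 asserts an equality with a defined value under Kleene equality.

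Now use $f(p) = f(q)$. This gives
\[
F_{\A}(p) = F_{\B}(f(p)) = F_{\B}(f(q)) = F_{\A}(q),
\]
which contradicts $F_{\A}(p) \neq F_{\A}(q)$. Hence $p$ and $q$ are nonconflicting.

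The argument is short and has no real obstacle. The only point needing care is that $F_{\B}$ is a partial function, so one must note that condition 2 forces $F_{\B}(f(p))$ to be defined before comparing values. Neither the hypothesis that $p$ and $q$ are distinct nor determinism of $\B$ is needed, since $F_{\B}$ is single-valued as a function.
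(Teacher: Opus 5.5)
Your proof is correct and uses the same argument as the paper. The paper argues directly rather than by contradiction: it assumes $F_{\A}(p)\converges$ and $F_{\A}(q)\converges$ and derives $F_{\A}(p) = F_{\B}(f(p)) = F_{\B}(f(q)) = F_{\A}(q)$, which is exactly your chain of equalities. Your side remarks are also accurate: neither distinctness of $p$ and $q$ nor determinism of $\B$ is used.
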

\begin{proof}
	Assume $F_{\A}(p) \converges$ and $F_{\A}(q) \converges$.
	Then, since $f$ is a morphism,
	$F_{\A}(p) = F_{\B}(f(p))$ and $F_{\A}(q) = F_{\B}(f(q))$.
	Using $f(p) = f(q)$, we derive $F_{\A}(p) = F_{\B}(f(p)) = F_{\B}(f(q)) = F_{\A}(q)$.
	Hence states $p$ and $q$ are nonconflicting.
\end{proof}

\begin{lemma}
	\label{decomposition functional simulations}
	Suppose $\A$ and $\B$ are 3NFAs, $\A \sqsubseteq_f \B$, and $p, q \in Q_{\A}$ are distinct states with $f(p)=f(q)$. 
	Then $\statemerge{\A}{p}{q} \sqsubseteq_g \B$, where $g$ is the restriction of $f$ to $Q_{\A} \setminus \{ q \}$.
\end{lemma}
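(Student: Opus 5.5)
Since \(f(p)=f(q)\), \Cref{lemma identified by simulation implies mergeable} shows that \(p\) and \(q\) are nonconflicting. Hence \(\statemerge{\A}{p}{q}\) is well defined. Write \(\C = \statemerge{\A}{p}{q}\) and let \(\alpha\) be the function from \Cref{def state merging}. The key observation is that \(f = g \circ \alpha\) on all of \(Q_{\A}\). For \(r \neq q\) this holds because \(\alpha(r)=r\) and \(g(r)=f(r)\). For \(r=q\) it holds because \(g(\alpha(q)) = g(p) = f(p) = f(q)\). With this identity, I will check the three conditions of \Cref{def functional simulation} for \(g\) one at a time, transferring each from \(f\).

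\emph{Initial state.} If \(q \neq q^0_{\A}\), then \(q^0_{\C} = q^0_{\A}\), so \(g(q^0_{\C}) = f(q^0_{\A}) = q^0_{\B}\). If \(q = q^0_{\A}\), then \(q^0_{\C} = p\), and \(g(p) = f(p) = f(q) = f(q^0_{\A}) = q^0_{\B}\).

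\emph{Transitions.} By definition, every transition \(r \xrightarrow{a}_{\C} r'\) has the form \(\alpha(s) \xrightarrow{a} \alpha(s')\) for some \(s \xrightarrow{a}_{\A} s'\). Since \(f\) is a morphism, \(f(s) \xrightarrow{a}_{\B} f(s')\). Using \(f = g\circ\alpha\), this reads \(g(r) \xrightarrow{a}_{\B} g(r')\). The only point needing care here is to state explicitly that the transition relation of \(\C\) is the smallest such relation. This guarantees that every \(\C\)-transition has such a preimage in \(\A\).

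\emph{Final state function.} Let \(r \in Q_{\C}\) with \(F_{\C}(r)\converges\). If \(r \neq p\), then \(F_{\C}(r) = F_{\A}(r)\) and \(g(r)=f(r)\), so the condition follows from \(f\). If \(r=p\), split into two cases. If \(F_{\A}(q)\converges\), then \(F_{\C}(p) = F_{\A}(q) = F_{\B}(f(q)) = F_{\B}(f(p)) = F_{\B}(g(p))\). Otherwise \(F_{\C}(p) = F_{\A}(p)\), which is defined, and \(F_{\A}(p) = F_{\B}(f(p)) = F_{\B}(g(p))\).

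I expect no real obstacle. The only mildly delicate step is the \(r=p\) case of the final-state condition, where the merged value may come from \(q\). That case is handled precisely by \(f(p)=f(q)\).
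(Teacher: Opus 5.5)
Your proof is correct and follows essentially the same route as the paper. Both arguments get well-definedness from the nonconflicting lemma, use the identity $f = g \circ \alpha$, and verify the three morphism conditions with the same case analysis; your final-state split is just a refinement of the paper's two cases, and you also justify $f = g \circ \alpha$ explicitly where the paper only asserts it.
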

\begin{proof}
	Note that, by \cref{lemma identified by simulation implies mergeable}, states $p$ and $q$ are nonconflicting and so
	$\C = \statemerge{\A}{p}{q}$ is well-defined.
	Let $\alpha \colon Q_{\A} \to Q_{\C}$ be the function that maps each state of $Q_{\A}$ to itself, except for $q$ which is mapped to $p$. Then $f = g \circ \alpha$.
	We check that $g \colon Q_{\C} \to Q_{\B}$ satisfies the three conditions of a morphism (for all $r, r'\in Q_{\C}$ and $a \in \Sigma$):
	\begin{enumerate}
		\item $g(q^0_{\C}) = q^0_{\B}$. We consider two cases:
		\begin{enumerate}
			\item 
			$q = q^0_{\A}$. Then, by definition of $\C$, $q^0_{\C} = p$. We derive
			\[
			g(q^0_{\C}) = g(p) = f(p) = f(q) = f(q^0_{\A}) = q^0_{\B}.
			\]
			\item 
			$q \neq q^0_{\A}$. Then, by definition of $\C$, $q^0_{\C} = q^0_{\A}$. We derive
			\[
			g(q^0_{\C}) = g(q^0_{\A}) = f(q^0_{\A}) = q^0_{\B}.
			\]
		\end{enumerate}
		\item $F_{\C}(r)\defined$ $\implies$ $F_{\C}(r) = F_{\B}(g(r))$.
		Assume $F_{\C}(r)\defined$. We consider two cases:
		\begin{enumerate}
			\item 
			$r=p \wedge F_{\A}(q) \converges$.
			In this case, by definition of $\C$, $F_{\C}(r) = F_{\A}(q)$. We derive:
			\[
			F_{\C}(r) = F_{\A}(q) = F_{\B}(f(q)) = F_{\B}(g \circ \alpha(q)) = F_{\B} (g(p)) = F_{\B}(g(r)).
			\]
			\item 
			$\neg (r=p \wedge F_{\A}(q) \converges)$.
			In this case, by definition of $\C$, $F_{\C}(r) = F_{\A}(r)$. In particular, $F_{\A}(r) \converges$.
			We derive:
			\[
			F_{\C}(r) = F_{\A}(r) = F_{\B}(f(r)) = F_{\B}(g(r)).
			\]
		\end{enumerate}
		\item $r \xrightarrow{a}_{\C} r'$ $\implies$ $g(r) \xrightarrow{a}_{\B} g(r')$.
		Now suppose $r \xrightarrow{a}_{\C} r'$.
		Then, by definition of $\C$, there is a transition
		$s \xrightarrow{a}_{\A} s'$ such that $\alpha(s) = r$ and $\alpha(s') = r'$.
		Therefore, as $f$ is a morphism,
		$f(s) \xrightarrow{a}_{\B} f(s')$.
		Now observe that $f(s) = g \circ \alpha(s) = g(r)$ and
		$f(s') = g \circ \alpha(s') = g(r')$.
		Hence $g(r) \xrightarrow{a}_{\B} g(r')$, as required.
	\end{enumerate}
\end{proof}

\begin{lemma}
	\label{lemma reduce SD to merge}
	Suppose $\A$ is an 3NFA with states $q$, $q'$ and $q''$ such that $q \xrightarrow{a} q'$ and $q \xrightarrow{a} q''$, for some $a \in \Sigma$, and $q'$ and $q''$ are distinct and nonconflicting.
	Then $\A$ is strongly determinizable iff
	$\statemerge{\A}{q'}{q''}$ is strongly determinizable.
\end{lemma}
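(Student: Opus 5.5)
We prove the two directions separately, using the morphism lemmas established above. Write $\C = \statemerge{\A}{q'}{q''}$; this 3NFA is well-defined because $q'$ and $q''$ are distinct and nonconflicting.

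For the ``if'' direction, assume $\C$ is strongly determinizable, so $\C \sqsubseteq_g \B$ for some 3DFA $\B$ and morphism $g$. By \Cref{state merging functional simulation} we have $\A \sqsubseteq_{\alpha} \C$, where $\alpha$ is the function from \Cref{def state merging}. Since $\sqsubseteq$ is transitive (\Cref{preorder}), composing the two morphisms gives $\A \sqsubseteq_{g\circ\alpha} \B$. Hence $\A$ is strongly determinizable. This direction is routine.

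For the ``only if'' direction, assume $\A \sqsubseteq_f \B$ for some 3DFA $\B$. The key step is to show $f(q') = f(q'')$. Since $f$ is a morphism and $q \xrightarrow{a} q'$, $q \xrightarrow{a} q''$, condition 3 of \Cref{def functional simulation} gives $f(q) \xrightarrow{a}_{\B} f(q')$ and $f(q) \xrightarrow{a}_{\B} f(q'')$. Because $\B$ is deterministic, $f(q') = f(q'')$. The states $q'$ and $q''$ are distinct and identified by $f$, so \Cref{decomposition functional simulations} applies and yields $\C \sqsubseteq_g \B$, where $g$ is the restriction of $f$ to $Q_{\A}\setminus\{q''\}$. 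Thus $\C$ is strongly determinizable.

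The only point needing any care is checking that the hypotheses of \Cref{decomposition functional simulations} hold: distinctness is given, and equality of the images comes from determinism of $\B$. Nonconflictingness is assumed in the statement, and \Cref{lemma identified by simulation implies mergeable} also confirms it, so the merge is well-defined in this direction as well. No real obstacle is expected; the proof is essentially a combination of \Cref{state merging functional simulation}, \Cref{decomposition functional simulations} and \Cref{preorder}.
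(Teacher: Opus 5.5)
Your proof is correct and matches the paper's argument. For ``$\Rightarrow$'' you use determinism of $\B$ to get $f(q')=f(q'')$ and then apply \Cref{decomposition functional simulations}; for ``$\Leftarrow$'' you compose the morphism of \Cref{state merging functional simulation} with the given one via \Cref{preorder}.
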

\begin{proof}
	$\mbox{}$
	\begin{itemize}
		\item 
		``$\Rightarrow$''
		Suppose that $\A$ is strongly determinizable.
		Then there exists a 3DFA $\B$ with a morphism $f$ from $\A$ to $\B$. 
		Observe that, as $f$ is a morphism,
		$f(q) \xrightarrow{a}_{\B} f(q')$ and $f(q) \xrightarrow{a}_{\B} f(q'')$.
		Since $\B$ is deterministic, this implies $f(q') = f(q'')$.
		This means we may use \Cref{decomposition functional simulations} to obtain $\statemerge{\A}{q'}{q''} \sqsubseteq \B$. Hence $\statemerge{\A}{q'}{q''}$ is strongly determinizable.
		\item 
		``$\Leftarrow$'' Suppose that $\statemerge{\A}{q'}{q''}$ is strongly determinizable.
		Then there exists a 3DFA $\B$ such that $\statemerge{\A}{q'}{q''} \sqsubseteq \B$.
		By \cref{state merging functional simulation},
		$\A \sqsubseteq \statemerge{\A}{q'}{q''}$.
		Now application of\cref{preorder} gives $\A \sqsubseteq \B$. Hence $\A$ is strongly determinizable.
	\end{itemize}
\end{proof}

\Cref{lemma not mergeable implies not SD} and \Cref{lemma reduce SD to merge} suggest a simple algorithm to decide strong determinizability of a 3NFA $\A$.
We say that $\A$ \emph{allows a merge} if
there are states $q, q', q'' \in Q_{\A}$ and an input symbol $a \in \Sigma$ such that $q \xrightarrow{a} q'$, $q \xrightarrow{a} q''$ and $q', q''$ are distinct and nonconflicting.
As long as $\A$ allows a merge, we may replace $\A$ by $\statemerge{\A}{q'}{q''}$ by \Cref{lemma reduce SD to merge}. This will terminate since 3NFA $\statemerge{\A}{q'}{q''}$ has one state less than 3NFA $\A$. 
If no more merge is possible, then we check whether the resulting 3NFA is deterministic.  
If it is then $\A$ is strongly determinizable. If it isn't then there are $q, q', q'' \in Q_{\A}$ and an input symbol $a \in \Sigma$ such that $q \xrightarrow{a} q'$, $q \xrightarrow{a} q''$ and $q', q''$ are conflicting, and
$\A$ is not strongly determinizable by \Cref{lemma not mergeable implies not SD}.
The pseudocode is shown in \Cref{alg:strong determinizability}.

\begin{algorithm}[h!]
	\begin{algorithmic}
		\Procedure{SD}{$\A$}
		\While{$\A$ allows a merge}
		\State find $q, q', q'' \in Q_{\A}$, $a \in \Sigma$ s.t.
		\State \hspace{2cm} $q \xrightarrow{a} q'$, $q \xrightarrow{a} q''$ and $q', q''$ distinct and nonconflicting
		\State $\A \gets \statemerge{\A}{q'}{q''}$
		\EndWhile
		\If{$\A$ is deterministic}
		\State \Return ``yes''
		\Else 
		\State \Return ``no''
		\EndIf
		\EndProcedure
	\end{algorithmic}
	\caption{Deciding whether a 3NFA $\A$ is strongly determinizable}
	\label{alg:strong determinizability}
\end{algorithm}

The next lemma implies that \cref{alg:strong determinizability} can be used to decide compatibility.

\begin{lemma}
	\label{la: compatible iff strongly determinizable}
	Let $p$ and $q$ be distinct states of a 3NFA $\A$.
	Then $p$ and $q$ are compatible iff they are nonconflicting and $\statemerge{\A}{p}{q}$ is strongly determinizable.
\end{lemma}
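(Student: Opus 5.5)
We prove the two directions separately, using the merge machinery developed above; throughout, write $\C = \statemerge{\A}{p}{q}$ and let $\alpha \colon Q_{\A} \to Q_{\C}$ be the function of \Cref{def state merging}, which maps $q$ to $p$ and is the identity on all other states.

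For ``$\Rightarrow$'', assume $p$ and $q$ are compatible. Then there exist a 3DFA $\B$ and a morphism $f$ with $\A \sqsubseteq_f \B$ and $f(p) = f(q)$. Since $p$ and $q$ are distinct, \Cref{lemma identified by simulation implies mergeable} shows that they are nonconflicting, so $\C$ is well defined. By \Cref{decomposition functional simulations}, the restriction $g$ of $f$ to $Q_{\A} \setminus \{q\}$ is a morphism from $\C$ to $\B$. Thus $\C \sqsubseteq \B$ for the 3DFA $\B$, and $\C$ is strongly determinizable.

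For ``$\Leftarrow$'', assume $p$ and $q$ are nonconflicting and $\C$ is strongly determinizable. Then there is a 3DFA $\B$ with $\C \sqsubseteq_g \B$ for some morphism $g$. By \Cref{state merging functional simulation}, $\alpha$ is a morphism from $\A$ to $\C$. The composition $f = g \circ \alpha$ is therefore a morphism from $\A$ to $\B$; this is the transitivity argument behind \Cref{preorder}, and here we need the explicit composed map rather than only the existence of a morphism. Finally, $f(p) = g(\alpha(p)) = g(p) = g(\alpha(q)) = f(q)$, so $p$ and $q$ are compatible.

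The main point requiring care is that compatibility is defined through a \emph{specific} morphism that identifies $p$ and $q$. Merely knowing $\A \sqsubseteq \B$ is not enough, so we must track the concrete maps: $\alpha$ together with composition in one direction, and restriction of $f$ in the other. We also use the distinctness hypothesis, since both \Cref{lemma identified by simulation implies mergeable} and \Cref{def state merging} require $p \neq q$. Beyond these points, both directions follow directly from the lemmas already established.
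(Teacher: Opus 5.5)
Your proof is correct and follows the paper's argument essentially step for step. For ``$\Rightarrow$'' you use \Cref{lemma identified by simulation implies mergeable} together with the restriction from \Cref{decomposition functional simulations}, and for ``$\Leftarrow$'' you compose $g$ with the merge morphism $\alpha$ of \Cref{state merging functional simulation} and observe that $g \circ \alpha$ identifies $p$ and $q$, exactly as the paper does.
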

\begin{proof}
	We prove both implications:
	\begin{itemize}
		\item 
		``$\Rightarrow$'' Suppose states $p$ and $q$ are compatible.
		Then $\A \sqsubseteq_f \B$, for some 3DFA $\B$ and morphism $f$ with $f(p)=f(q)$.
		By \cref{lemma identified by simulation implies mergeable}, states $p$ and $q$ are nonconflicting.
		Next we apply \Cref{decomposition functional simulations} to conclude that $\statemerge{\A}{p}{q} \sqsubseteq \B$.
		Hence $\statemerge{\A}{p}{q}$ is strongly determinizable,
		as required.
		\item 
		``$\Leftarrow$'' Suppose $p$ and $q$ are nonconflicting and $\statemerge{\A}{p}{q}$ is strongly determinizable. Then $\statemerge{\A}{p}{q} \sqsubseteq_g \B$, for some 3DFA $\B$ and morphism $g$.
		By \Cref{state merging functional simulation}, $\A \sqsubseteq_f \statemerge{\A}{p}{q}$, for some morphism $f$ with $f(p)=f(q)$.
		Hence, by (the proof of) \Cref{preorder}, $\A \sqsubseteq_{g \circ f} \B$.
		Note that $g \circ f(p)= g \circ f(q)$.
		Hence states $p$ and $q$ are compatible.
	\end{itemize}
\end{proof}

If two states are apart and have incoming transitions with the same label, then trivially the source states of these transitions are also apart.

\begin{lemma}[Back propagation]
	\label{lemma back propagation}
	Let $\A$ be a 3NFA with states $p$, $p'$, $q$ and $q'$. Let $a \in \Sigma$.
	Then $p \xrightarrow{a} p'$ and $q \xrightarrow{a} q'$ and $p' \apart q'$ implies $p \apart q$.
\end{lemma}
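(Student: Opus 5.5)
The plan is to unfold the definition of apartness and prepend the symbol $a$ to the witness. Since $p' \apart q'$, the definition gives a word $w \in \Sigma^*$ and states $p'', q'' \in Q_{\A}$ with $p' \xRightarrow{w} p''$, $q' \xRightarrow{w} q''$, and $p''$ and $q''$ conflicting. The candidate witness for $p \apart q$ is $a\,w$.

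The only step is to check that $p \xRightarrow{a\,w} p''$ and $q \xRightarrow{a\,w} q''$. This follows directly from the second clause in the inductive definition of $\Rightarrow$ in \Cref{3NFA}: from $p \xrightarrow{a} p'$ and $p' \xRightarrow{w} p''$ we obtain $p \xRightarrow{a\, w} p''$. The same argument gives $q \xRightarrow{a\,w} q''$. Since $p''$ and $q''$ are conflicting, $a\,w$ witnesses $p \apart q$, that is, $a\,w \vdash p \apart q$.

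No real obstacle arises. Because the automaton may be nondeterministic, the argument must use the given transitions $p \xrightarrow{a} p'$ and $q \xrightarrow{a} q'$ explicitly rather than speak of ``the'' $a$-successor. The definition of apartness for 3NFAs only asks for \emph{some} pair of runs on the witness, so this causes no difficulty.
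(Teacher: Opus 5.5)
Your proof is correct and is the same argument the paper has in mind: the paper gives no written proof and calls the lemma trivial. Its examples use exactly your step of prepending $a$ to a witness, for instance deriving $a\,b \vdash t_2 \apart t_3$ from $b \vdash t_5 \apart t_4$ via the second clause in the definition of $\Rightarrow$.
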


The apartness relation satisfies a weaker version of
\emph{co-transitivity}, stating that if $w \vdash r\apart r'$ and from $q$ there is a $w$ path to a known state, then $q$ must be apart from $r$ or $r'$ (or both).
\begin{lemma}[Weak co-transitivity]
	\label{la: weak co-transitivity}
	Let $\A$ be a 3NFA. Then, for all states $p$, $q$, $r$ and $r'$, and for all $w \in \Sigma^*$,
	\[
	w \vdash p \apart q ~\wedge~ r \xRightarrow{w} r' ~\wedge~ F_{\A}(r') \converges  ~~\Longrightarrow~~
	p \apart r ~\vee~ q \apart r
	\]
\end{lemma}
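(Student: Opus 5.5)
The plan is to unfold the definition of apartness for the witness $w \vdash p \apart q$ and compare the final states reached along $w$ with the known state $r'$. By definition there are states $p', q'$ with $p \xRightarrow{w} p'$, $q \xRightarrow{w} q'$, and $p'$ and $q'$ conflicting. In particular $F_{\A}(p')\converges$, $F_{\A}(q')\converges$ and $F_{\A}(p') \neq F_{\A}(q')$. Since the codomain of $F_{\A}$ is the two-element set $\{+,-\}$, one of these two values must differ from $F_{\A}(r')$, which is defined by assumption.

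The argument then splits into two cases.
\begin{enumerate}
\item If $F_{\A}(r') \neq F_{\A}(p')$, then $p'$ and $r'$ are conflicting, because both are defined and distinct. Together with $p \xRightarrow{w} p'$ and $r \xRightarrow{w} r'$, this shows $w \vdash p \apart r$.
\item Otherwise $F_{\A}(r') = F_{\A}(p') \neq F_{\A}(q')$. Then $q'$ and $r'$ are conflicting, so $w \vdash q \apart r$.
\end{enumerate}
Apartness is symmetric by definition, so either conclusion gives the disjunction $p \apart r \vee q \apart r$ in the stated form.

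There is no real obstacle here. The only point needing care is the pigeonhole observation: a defined value $F_{\A}(r')$ cannot agree with both $F_{\A}(p')$ and $F_{\A}(q')$, because these two are distinct elements of $\{+,-\}$. The same witness $w$ is reused, and no induction or determinism is required; this also covers 3NFAs, since we only pick some $w$-successors.
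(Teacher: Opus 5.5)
Your argument is correct and complete. Unfold $w \vdash p \apart q$ to conflicting states $p', q'$; the defined value $F_{\A}(r')$ must then differ from one of the two distinct values $F_{\A}(p'), F_{\A}(q') \in \{+,-\}$, which gives $w \vdash p \apart r$ or $w \vdash q \apart r$ with the same witness. The paper states this lemma without proof, and your direct case split, which needs neither induction nor determinism, is the evident argument it relies on.
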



In the $\lsharpb$ algorithm, rather than requiring that states in the basis  are pairwise
apart, we may just require them to be pairwise incompatible. 
However, a key advantage of apartness over incompatibility is that witnesses for the apartness of basis states may subsequently be used to identify frontier states.
Therefore, whenever we add a state to the basis that is not apart from some of the basis states, it may pay off to perform some additional queries and try to establish apartness of incompatible basis states
(queries that may or may not be answered by the incomplete teacher).

In order to check whether two distinct states $p$ and $q$ of a prefix tree acceptor $\Obs$ are incompatible, we first check if they are conflicting.
If they are, then $p$ and $q$ are apart (and thus incompatible) with witness $\epsilon$.
Otherwise, by \Cref{la: compatible iff strongly determinizable}, we apply \Cref{alg:strong determinizability} to determine whether
$\statemerge{\A}{p}{q}$ is strongly determinizable.
Basically, there are three cases to consider.
In the first case, the subtrees of $p$ and $q$ are disjoint and share no states. The merges of \Cref{alg:strong determinizability} will then squeeze the subtrees of $p$ and $q$ together, except when it encounters two $a$-transitions leading to conflicting states.
The result of running \Cref{alg:strong determinizability} will be a tree again, and $p$ and $q$ are compatible iff this tree is deterministic.  Moreover, $p$ and $q$ are incompatible iff they are apart.
In the second case, $q$ is contained in the subtree of $p$, and in the third case $p$ is contained in the subtree of $q$.  As the second and third case are symmetric, we will only cover the second case.
Suppose the unique path from $p$ to $q$ in $\Obs$ consists of $c$ transition (note $c>0$ as $p$ and $q$ are distinct).
Then $\statemerge{\A}{p}{q}$ contains exactly one loop, consisting of $c-1$ transitions from $p$ to the parent $q'$ of $q$, followed by a transition from $q'$ to $p$.
Subsequent merges by \Cref{alg:strong determinizability} may roll up one path from $q$ around the loop, like a ball of wool, but it will not create any new loops.

\begin{example}
	As an example, \Cref{Fig:Wool} shows an observation tree $\Obs$ and the NFA that results when we want to determine whether states $t_0$ and $t_2$ are compatible, that is, we run \Cref{alg:strong determinizability} on $\statemerge{\Obs}{t_0}{t_2}$.
	Morphism $f$, denoted via coloring of states, maps the states of $\Obs$ to states of $\B$.
	Since the resulting NFA $\B$ is nondeterministic, we conclude that states $t_0$ and $t_2$ are incompatible. However, the two states are not apart.
	Note that $f$ maps states $t_0$, $t_2$, $t_4$ and $t_6$ of $\Obs$ to state $t_0$ of $\B$,
	and states $t_1$, $t_3$ and $t_5$ of $\Obs$ to state $t_1$ of $\B$.
	Also note that $eb \vdash t_0 \apart t_6$ and $d \vdash t_1 \apart t_5$ in $\Obs$.
	Via back propagation, $d \vdash t_1 \apart t_5$ implies $a d \vdash t_0 \apart t_4$.
	Using weak co-transitivity and back propagation, we may try to establish apartness of $t_0$ and $t_2$:
	\begin{itemize}
		\item 
		We run witness $eb$ from state $t_2$, that is, we pose a membership query $a b e b$.
		If the teacher answers with ``no'' then $eb \vdash t_0 \apart t_2$ (and we are done). If the teacher answers with ``yes'' then we infer $eb \vdash t_2 \apart t_6$.
		\item 
		If $eb \vdash t_2 \apart t_6$, then we run witness $eb$ from $t_4$, that is, we pose membership query $ababeb$.  If the teacher answers ``yes'' then $t_4 \apart t_6$, which implies $t_0 \apart t_2$ by back propagation (and we are done).
		If the teacher answers ``no'' then $t_2 \apart t_4$, which implies $t_0 \apart t_2$ by back propagation (and we are done)
		\item
		If  the teacher did not answer the preceding queries, we may run witness $d$ from $t_3$, that is, pose query $abad$.  If the query is answered with ``yes'' then $t_1 \apart t_3$, which implies $t_0 \apart t_2$ (and we are done).  If the answer is ``no'' then $t_3 \apart t_5$, which implies $t_0 \apart t_3$ by back propagation (and we are done).
	\end{itemize}
	
	\begin{figure}[h] 
		\begin{center}
			\begin{tikzpicture}[->,>=stealth',shorten >=1pt,auto,node distance=1.5cm,main node/.style={circle,draw,font=\sffamily\large\bfseries},
				]
				\def\yoffset{8mm}
				\node[initial,accepting,state,basis] (0) {\treeNodeLabel{$t_0$}};
				\node[state, frontier] (1) [right of=0] {\treeNodeLabel{$t_1$}};
				\node[state, basis,dashed] (2) [right of=1] {\treeNodeLabel{$t_2$}};
				\node[state, frontier,dashed] (3) [right of=2] {\treeNodeLabel{$t_3$}};
				\node[state, basis,dashed] (4) [right of=3] {\treeNodeLabel{$t_4$}};
				\node[state, frontier,dashed] (5) [right of=4] {\treeNodeLabel{$t_{5}$}};
				\node[state, basis,dashed] (6) [right of=5] {\treeNodeLabel{$t_{6}$}};
				\node[state, q0class] (7) [above of=0] {\treeNodeLabel{$t_{7}$}};
				\node[state,q1class,accepting] (8) [above of=7] {\treeNodeLabel{$t_8$}};
				\node[state, q0class] (11) [above of=6] {\treeNodeLabel{$t_{11}$}};
				\node[state, q2class] (12) [above of=11] {\treeNodeLabel{$t_{12}$}};
				\node[state, accepting,q4class] (10) [above of=5] {\treeNodeLabel{$t_{10}$}};
				\node[state, q3class] (9) [above of=1] {\treeNodeLabel{$t_9$}};
				
				\node[initial,accepting,state,basis] [below of=3, yshift=-2*\yoffset](q0) {\treeNodeLabel{$t_0$}};
				\node[state,frontier] (q1) [right of=q0] {\treeNodeLabel{$t_1$}};
				\node[state,q0class] (q2) [below of=q0] {\treeNodeLabel{$t_7$}};
				\node[state,q1class,accepting] (q3) [below left of=q2] {\treeNodeLabel{$t_8$}};
				\node[state,q2class] (q4) [below right of=q2] {\treeNodeLabel{$t_{12}$}};
				\node[state,q3class] (q5) [above right of=q1] {\treeNodeLabel{$t_9$}};
				\node[state,q4class,accepting] (q6) [below right of=q1] {\treeNodeLabel{$t_{10}$}};
				
				\node[anchor=base] (f) at ($ (3.base) !.5! (q1.base) $) {\begin{tikzcd}
						{} \arrow{r}{f}
						&[8mm] {}
				\end{tikzcd}};
				
				\path[every node/.style={font=\sffamily\scriptsize}]
				(0) edge node {$e$} (7)
				(0) edge node[below] {$a$} (1)
				(1) edge node[below] {$b$} (2)
				(1) edge node {$d$} (9)
				(2) edge node [below] {$a$} (3)
				(5) edge node {$d$} (10)
				(6) edge node {$e$} (11)
				(3) edge node [below] {$b$} (4)
				(7) edge node {$b$} (8)
				(4) edge node [below] {$a$} (5)
				(11) edge node {$b$} (12)
				(5) edge node [below] {$b$} (6)
				(q0) edge [bend left, above] node {$a$} (q1)
				(q0) edge  node {$e$} (q2)
				(q1) edge [sloped, above] node {$d$} (q5)
				(q1) edge [sloped, below] node {$d$} (q6)
				(q1) edge [bend left, below] node {$b$} (q0)
				(q2) edge  node {$b$} (q3)
				(q2) edge  node {$b$} (q4)
				;
			\end{tikzpicture}
			\caption{A prefix tree acceptor $\Obs$ (top) for a NFA $\B$ (bottom).}
			\label{Fig:Wool}
		\end{center}
	\end{figure}
\end{example}	

In general, provided the teacher answers queries, we may always establish apartness of incomplete basis states:
(a) let $c$ be the number of transitions from $p$ to $q$.
(a) There must be two states $r_0$ and $r_k$ of $\Obs$ that are apart and mapped to the same state $s$ on the cycle in $\B$ (otherwise $p$ and $q$ would be compatible).
(b) Let $w$ be a witness for $r_0 \apart r_k$ and let $r_1 ,\ldots, r_{k-1}$ the states in between $r_0$ and $r_k$ that are also mapped to $s$. 
Then, for all $0 \leq j < k$, there is a path consisting of $c$ transitions from $r_j$ to $r_{j+1}$. If we run query $w$ from each state $r_1 ,\ldots, r_{k-1}$ and all queries are answered, there must be an index $0 \leq j < k$ such that $r_j \apart r_{j+1}$.
(Actually, we may find index $j$ using binary search).
(c) Now $p \apart q$ follows by back propagation.
\fi

\end{document}